\definecolor{red}{rgb}{1,0,0} 
\definecolor{blue}{rgb}{0,0,0}
\newtheorem{theorem}{Theorem}
\newtheorem{lemma}{Lemma}
\newtheorem{definition}{Definition} 
\newtheorem{proof}{Proof}
\newcommand{\vs}[1]{{\color{blue} #1}}
\begin{document}

\title{\textit{EdgeMatrix}: A Resource-Redefined Scheduling Framework for SLA-Guaranteed Multi-Tier Edge-Cloud Computing Systems}

\author{Shihao~Shen,
		Yuanming~Ren,
		Yanli~Ju,
		Xiaofei~Wang,~\IEEEmembership{Senior Member,~IEEE,}
		Wenyu~Wang,
		and~Victor~C.M.~Leung,~\IEEEmembership{Life~Fellow,~IEEE}%
\thanks{%
	 Manuscript received May 16, 2022; revised September 3, 2022; accepted October 25, 2022.
	 This work was supported in part by the National Key Research and Development Program of China under Grant No.2019YFB2101901, in part by the National Science Foundation of China under Grant No.62072332, and in part by the Tianjin Xinchuang Haihe Lab under Grant No.22HHXCJC00002.
	 This paper was presented in part at the IEEE INFOCOM, Virtual Conference, 2022\cite{edgematrix}. \textit{(Corresponding author: Xiaofei Wang.)}

Shihao~Shen, Yuanming~Ren, Yanli~Ju and Xiaofei~Wang are with the College of Intelligence and Computing, Tianjin University, Tianjin 300350, China (email: {shenshihao, renyuanming, yanliju, xiaofeiwang}@tju.edu.cn).

Wenyu~Wang is with the Shanghai Zhuichu Networking Tecnologies Co., Ltd., Shanghai 200120, China (email: wayne@pplabs.org).

Victor C.M. Leung is with the College of Computer Science and Software Engineering, Shenzhen University, Shenzhen 518052, China (e-mail: vleung@ieee.org).

}%

}

\markboth{IEEE JOURNAL ON SELECTED AREAS IN COMMUNICATIONS,~Vol.~xx, No.~x, xxxx~2023}{Shen \MakeLowercase{\textit{et al.}}: \textit{EdgeMatrix}:  A Resource-Redefined Scheduling Framework for SLA-Guaranteed Multi-Tier Edge-Cloud Computing Systems}

\maketitle

\begin{abstract}

\vs{With the development of networking technology, the computing system has evolved towards the multi-tier paradigm gradually. However, challenges, such as multi-resource heterogeneity of devices, resource competition of services, and networked system dynamics, make it difficult to guarantee service-level agreement (SLA) for the applications. In this paper, we propose a multi-tier edge-cloud computing framework, \textit{EdgeMatrix}, to maximize the throughput of the system while guaranteeing different SLA priorities. First, in order to reduce the impact of physical resource heterogeneity, \textit{EdgeMatrix} introduces the Networked Multi-agent Actor-Critic (NMAC) algorithm to re-define physical resources with the same quality of service as logically isolated resource units and combinations, i.e., cells and channels. In addition, a multi-task mechanism is designed in \textit{EdgeMatrix} to solve the problem of Joint Service Orchestration and Request Dispatch (JSORD) for matching the requests and services, which can significantly reduce the optimization runtime. For integrating above two algorithms, \textit{EdgeMatrix} is designed with two time-scales, i.e., coordinating services and resources at the larger time-scale, and dispatching requests at the smaller time-scale. Realistic trace-based experiments proves that the overall throughput of \textit{EdgeMatrix} is 36.7\% better than that of the closest baseline, while the SLA priorities are guaranteed still.}
\end{abstract}

\begin{IEEEkeywords}
Multi-tier computing, resource customization, service orchestration, request dispatch.
\end{IEEEkeywords}

\section{Introduction}\label{sec:Introduction}

\subsection{Background and Problem Statement}
\label{subsec:Background and Problem Statement}

\IEEEPARstart{W}{ith} the explosive growth of networked devices and services such as autonomous driving, virtual reality and smart city, the traditional network architecture is facing challenges~\cite{nanakkal2021brief, chakareski2020multi, wu2020collaborate}. According to the GSMA's \textit{The Mobile Economy 2020} report, the number of global iot connections will increase from 12 billion in 2019 to nearly 25 billion by 2025\cite{GSMA2020}. Therefore, the centralized cloud computing faces the following challenges: (\textit{$\romannumeral1$}) the rapid growth of network devices leads to more computing requests, which makes centralized cloud computing face the challenge of computing and communication resources; (\textit{$\romannumeral2$}) Long-distance communication between cloud cluster and network devices can lead to high transmission latency, so it is difficult to meet the low latency requirements of services such as autonomous driving.

To address the above issues, edge computing~\cite{shi2016} and fog computing~\cite{yi2015survey} are expected to bring a new direction to the next-generation network architecture. They can be used to leverage device resources between the cloud and the end-user, thus combining the network edge in the cloud computing architecture to form the multi-tier edge-cloud computing system (following call it the multi-tier system for simplicity)\cite{yang2019multi, chen2018fog}. However, unlike the centralized cloud computing paradigm, the multi-tier system's widely distributed and heterogeneous system architecture poses new challenges\cite{tran2018}.

\subsection{Motivation and Challenges }
\label{subsec:Motivation and Challenges }

In cloud computing, cloud computing providers and application service providers propose Service-Level-Agreement (SLA) as a quantitative indicator to reflect service quality, so service management and optimization can be performed based on SLA\cite{wu2011, choi2021lazy}. Based on the idea of SLA, we provide service quality assurance for multiple services in the complex networked environment of multi-tier systems.

Although there are complementary advantages of various devices in a multi-tier system, and the overall efficiency and resource utilization of the system can be improved through the collaboration of various devices, there are three inherent challenges to be faced in the collaboration of various devices.
\vs{(\textit{$\romannumeral1$}) \textit{\textbf{Multi-resource heterogeneity}}. This challenge is in terms of the resource supply of edge devices. Due to the distributed deployment of edge clusters, the computing, caching and other resources provided by different edge clusters are highly heterogeneous, which makes it difficult to efficiently integrate and utilize edge cluster resources.
(\textit{$\romannumeral2$}) \textit{\textbf{Resource competition}}. This challenge is in terms of the resource requirements of application requests. Due to the resource limitation of a single edge cluster, all kinds of application services cannot be deployed in each edge cluster. Moreover, the resource requirements of different types of applications change dynamically in real-time, which makes it difficult to coordinate the resource competition among different application requests.
(\textit{$\romannumeral3$}) \textit{\textbf{Networked system dynamics}}. This challenge is in terms of matching the demand and supply of resources. Since requests need to be sent over the network to the edge devices to get the required resources, but the resources available in the system are always changing dynamically, resulting in the requests not being efficiently transmitted to the appropriate devices.
The three challenges mentioned above correspond to the supply of edge devices, the requirement of application services, and the matching of supply and requirement of resources, which restrict the overall efficient operation of the multi-tier system.}
Therefore, there is an urgent need to design a framework for cross-tier optimization of services and requests in the multi-tier system, to better guarantee the SLA of various services.

\begin{figure*}[t]%
	\centering
	\includegraphics[width=1.0\linewidth]{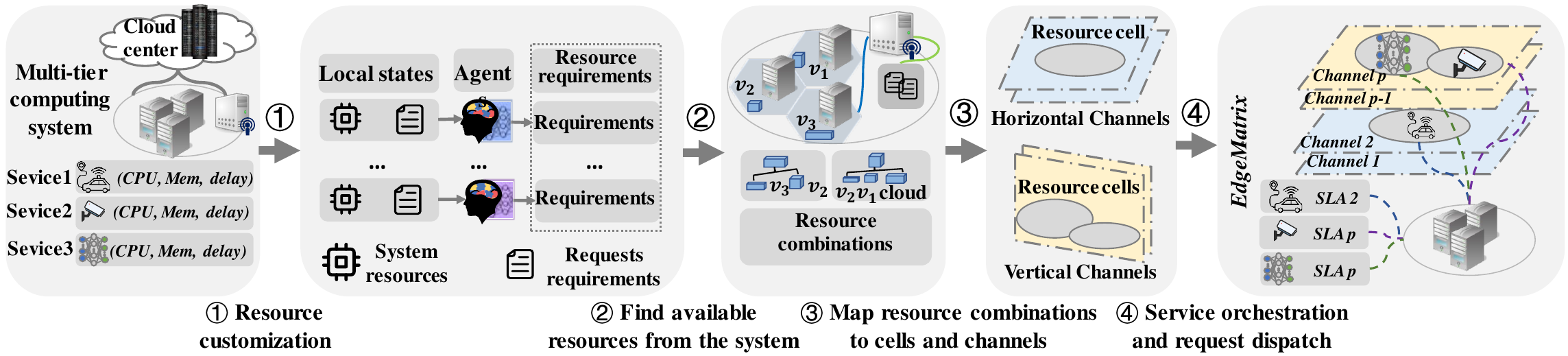}
	\setlength{\abovecaptionskip}{-0.5cm} 
	\caption{Understand \textit{EdgeMatrix} through a case study.}
	\label{fig:CaseStudy}
\end{figure*}

\subsection{Technical Challenges and Solutions}
\label{subsec:Technical Challenges and Solutions}

The multi-tier system is similar to a large company with a multi-tier organizational structure, and the relationship between the cloud center, fog node and edge node is just like the relationship between senior management, middle management and front-line employees in a company~\cite{yang2019multi}.
To better cope with the three inherent challenges in the multi-tier system, i.e., multi-resource heterogeneity, resource competition and networked system dynamics, our work correspondingly focuses on resource customization, service orchestration and request dispatch. Finally, we propose a joint optimization framework for multi-tier systems, i.e. \textit{EdgeMatrix}, which optimizes in the following aspects.

\textit{\textbf{Resource customization}}. \vs{The multi-tier system not only has a huge number of heterogeneous server nodes, but also has various types of services with different requirements. Therefore, the complexity of one-step decision algorithms based on global information will explode with increasing the number of server nodes and service types~\cite{wang2020service, jovsilo2020computation, shang2021deep}. Furthermore, the algorithm cannot make decisions flexibly in large-scale heterogeneous scenarios. To solve the above problem, we use multi-agent deep reinforcement learning (MADRL) to redefine heterogeneous physical resources so as to provide customized isolated resources (i.e., \textit{resource cell} and \textit{resource channel}). In this way, a coarse-grained optimization is performed using a low-complexity algorithm for different channels, and then a fine-grained optimization is performed in parallel for different cells in the channel. Thus, the impact of cluster size on algorithm complexity can be significantly reduced by two-step decision optimization. In addition, the impact of device heterogeneity can be reduced by adjusting the amount of cell and channel resources.} \textbf{Specially, we customize the resources of edge-edge nodes (Horizontal) and edge-cloud nodes (Vertical) to form logically isolated resource combinations called \textit{resource cells} in multi-tier systems.} 
In biology, cells in different locations have different functions and each cell has a separate space. Drawing on this biological concept, we name this reorganized resource unit as \textit{resource cells} to be the basic element of \textit{EdgeMatrix}.
\textbf{We further call the set of cells with similar characteristics (resources, latency, etc.) a \textit{resource channel}, which means that each resource channel has its corresponding SLA priority.}

\textit{\textbf{Service orchestration}}. \vs{Since the resources of a single edge node are usually not high, the number of services it can deploy is small. Therefore, it is difficult for a single edge node to meet the requirements of all kinds of services, which leads to resource competition among services \cite{zhang2019}. For example, a high latency tolerant service deployed in a cloud cluster can meet SLA requirements, but it is deployed in an edge node, leading to SLA violations for other latency-sensitive services due to insufficient resources at the edge.} Therefore, we need to reduce the negative impact of resource competition by orchestrating the services in \textit{EdgeMatrix}.

\textit{\textbf{Request dispatch}}. \vs{Both the available resources and the number of pending requests are highly dynamic, making it difficult to dispatch requests to the appropriate nodes for processing~\cite{hu2019}. In addition, the request processing needs to depend on the corresponding type of service, so the request dispatch and service orchestration need to be jointly optimized.}
Since requests need to be dispatched quickly to reduce queuing delays, but service orchestration requires large time intervals to avoid excessive service deployment costs, we adopt a two-time-scale framework, i.e., performing resource customization and service orchestration sequentially on a large time scale (frame) and request dispatch on a small time scale (slot).

\vs{In summary, the overview of \textit{EdgeMatrix} is shown in Fig.~\ref{fig:CaseStudy}. First, \textit{EdgeMatrix} specifies the resource requirements of the application and the state of physical resources in the multi-tier system~\ding{172}. After that, based on the solution shown in Sec.~\ref{subsec:MADRL for Resource Customization}, it finds the available resources in the system~\ding{173} and further combines them into cells and channels~\ding{174}. Finally, based on cells and channels, \textit{EdgeMatrix} use the solution as shown in Sec.~\ref{subsec:Joint Service Orchestration and Request Dispacth} to jointly optimize service orchestration and request dispatch~\ding{175}.}

\subsection{Main Contributions}

Some of the results of this paper have been presented in the conference version\cite{edgematrix}. Based on the previous work, this paper extends the work by refining system design and adding more experimental results. In summary, our main contributions are as follows:
\begin{itemize}[leftmargin=*]

	\item We design a Networked Multi-agent Actor-Critic (NMAC) algorithm to map physical resources in each region into logical resource combinations and improve system stability through off-line centralized training and online distributed decision-making.
	
	\item We propose a method based on mixed-integer linear programming (MILP) to solve JSORD and further reduce the solving time by performing a multi-task mechanism in parallel.
	
\begin{figure}[t]%
	\centering
	\includegraphics[width=1.0\linewidth]{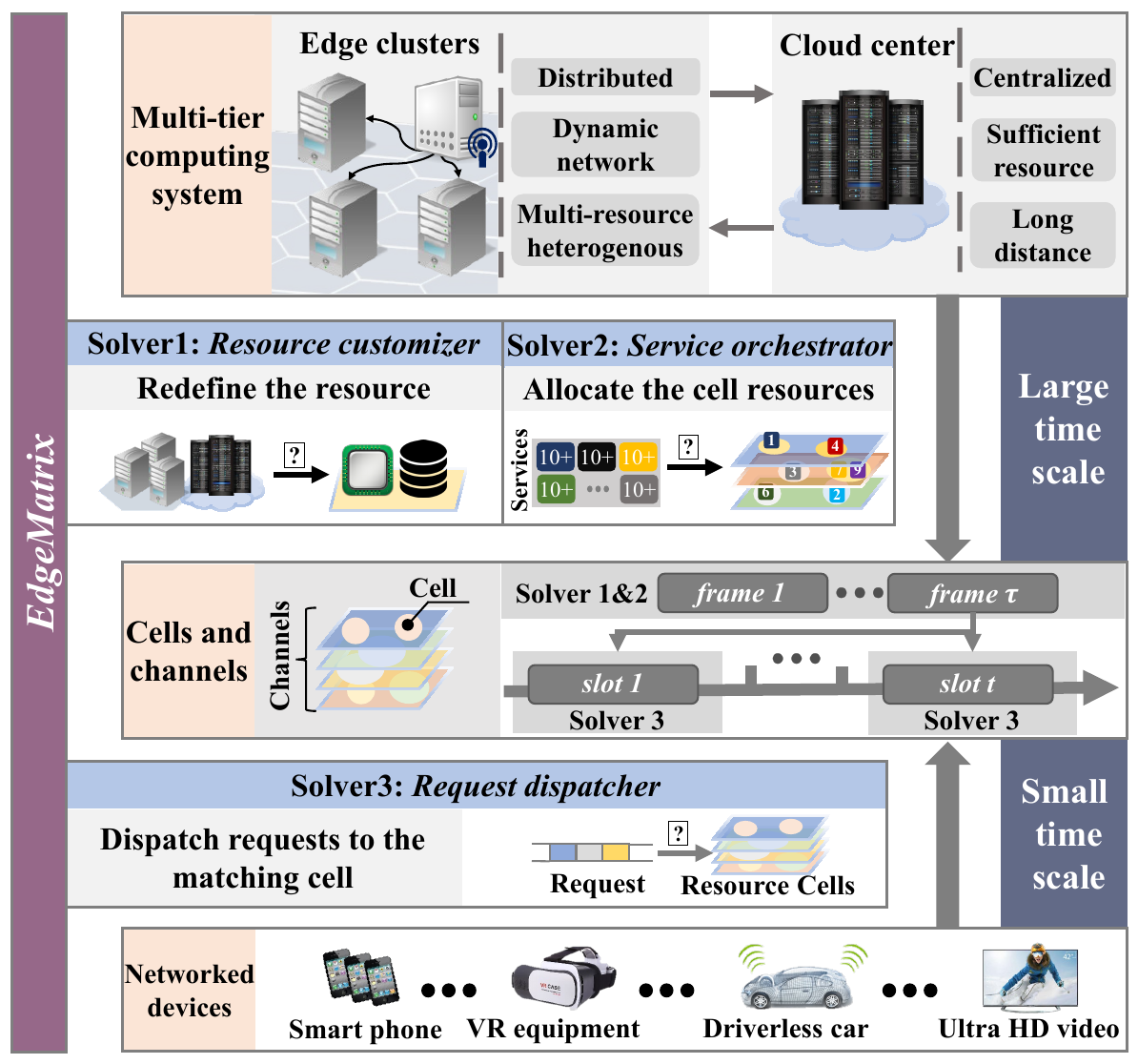}
	\setlength{\abovecaptionskip}{-0.5cm} 
	\caption{Resource customization with joint service orchestration and request dispatch in \textit{EdgeMatrix}.}
	\label{fig:Problem}
\end{figure}
	
	\item We design a two-time-scale framework to ensure coordinated operation of the components, that is, resource customization and service orchestration are performed in each frame, request dispatch is performed in each slot as shown in Fig. \ref{fig:Problem}, and the effectiveness of \textit{EdgeMatrix} is verified by evaluation based on real trace.

\end{itemize}

In the sequel, Sec. \ref{sec:System Model and Problem Statement} introduces the system model and state the problems. Sec. \ref{sec:Algorithm and System Design} and Sec. \ref{sec:System Design and Implementation} elaborate the algorithm and implementation. Sec. \ref{sec:Performance Evaluation} presents and analyzes the experiment results. Finally, Sec. \ref{sec:Related Work} reviews related works and Sec. \ref{sec:Conclusion} concludes the paper.

\section{System Model and Problem Statement}
\label{sec:System Model and Problem Statement}

\renewcommand{\arraystretch}{1.2} %
\begin{table*}[t]  
	\centering  

		\begin{threeparttable}  
			\caption{MAIN NOTATIONS}
			\label{table-1}  
			\begin{tabular}{m{1.0cm}m{6.8cm}m{1.0cm}m{7.1cm}}  
				\toprule          
				\multicolumn{1}{c}{\bf Notations }&\multicolumn{1}{c}{\bf Descriptions}&\multicolumn{1}{c}{\bf Notations }&\multicolumn{1}{c}{\bf Descriptions}\cr
				\midrule 
				
				$B_{i}$ & The total bandwidth of edge node $i$. & $R_{m_{i,\tau}}$ & The memory requirement of edge node $i$ at frame ${\tau}$.\cr
				$G_{d}(\mathcal{V}, \mathcal{E})$ & The network topology of the edge cluster in region $d$. & $R_{p,m}$ & The memory capacity of cell $m$ in channel $p$.\cr
				$h_{p,l}$ & The request packet size of service $l$ in channel $p$. & $t$ & The index of a slot.\cr
				$i$ & The index of an edge node. & $t_{i,m}$ & The transmission latency between node $i$ and cell $m$.\cr
				$\mathcal{L}_p $ & The set of service in channel $p$. & $t_{p,l}$ & The maximum response time of service $l$ in channel $p$.\cr
				$m$ & The index of a cell. & $w_{p,l}$ & The required computing capacity of service $l$ in channel $p$. \cr
				$\mathcal{M}_{p}$ & The set of cells in channel $p$. & $W_{\rm{cloud}}$ & The computing capacity owned by the cloud center.\cr
				$N$ & The number of edge nodes. & $W_{i}$ & The computing capacity of edge node $i$.\cr
				$\mathcal{N}_{i}$ & The set of edge node $i$ and its adjacent nodes. & $W_{m_{i,\tau}}$ & The computing requirement of edge node $i$ at frame ${\tau}$.\cr
				$o_{p,l}$ & The execution time of service $l$ in channel $p$. & $W_{p,m}$ & The computing capacity of cell $m$ in channel $p$.\cr
				$p$ & The index of a channel. & $x^{\tau}_{p,l,m}$ & Whether service $l \in \mathcal{L}_{p}$ is orchestrated on cell $m$ in frame $\tau$.\cr
				$\mathcal{P}$ & The set of SLA priorities or channels. & $y_{p,l,i,m}^{t}$ & The probability that a request of service $l \in \mathcal{L}_{p}$ arrived at edge node $i$ is dispatched to cell $m$ at slot t.\cr
				$r_{p,l}$ & The memory size required of service $l$ in channel $p$. & $\lambda^{t}_{p,l,i}$ & The number of requests from users for service $l \in \mathcal{L}_{p}$ arrived at node $i$ at slot $t$.\cr
				$R_{\rm{cloud}}$ & The total memory owned by the cloud center. & $\pi_{i,\tau}$ & The policy for resource customization. \cr
				$R_{i}$ & The total memory of edge node $i$. & $\tau$ & The index of a frame.\cr				
				
				\bottomrule  
			\end{tabular}  
		\end{threeparttable}
	 
\end{table*}

\subsection{Multi-tier System}

\begin{figure}[t]%
	\centering
	\includegraphics[width=1.0\linewidth]{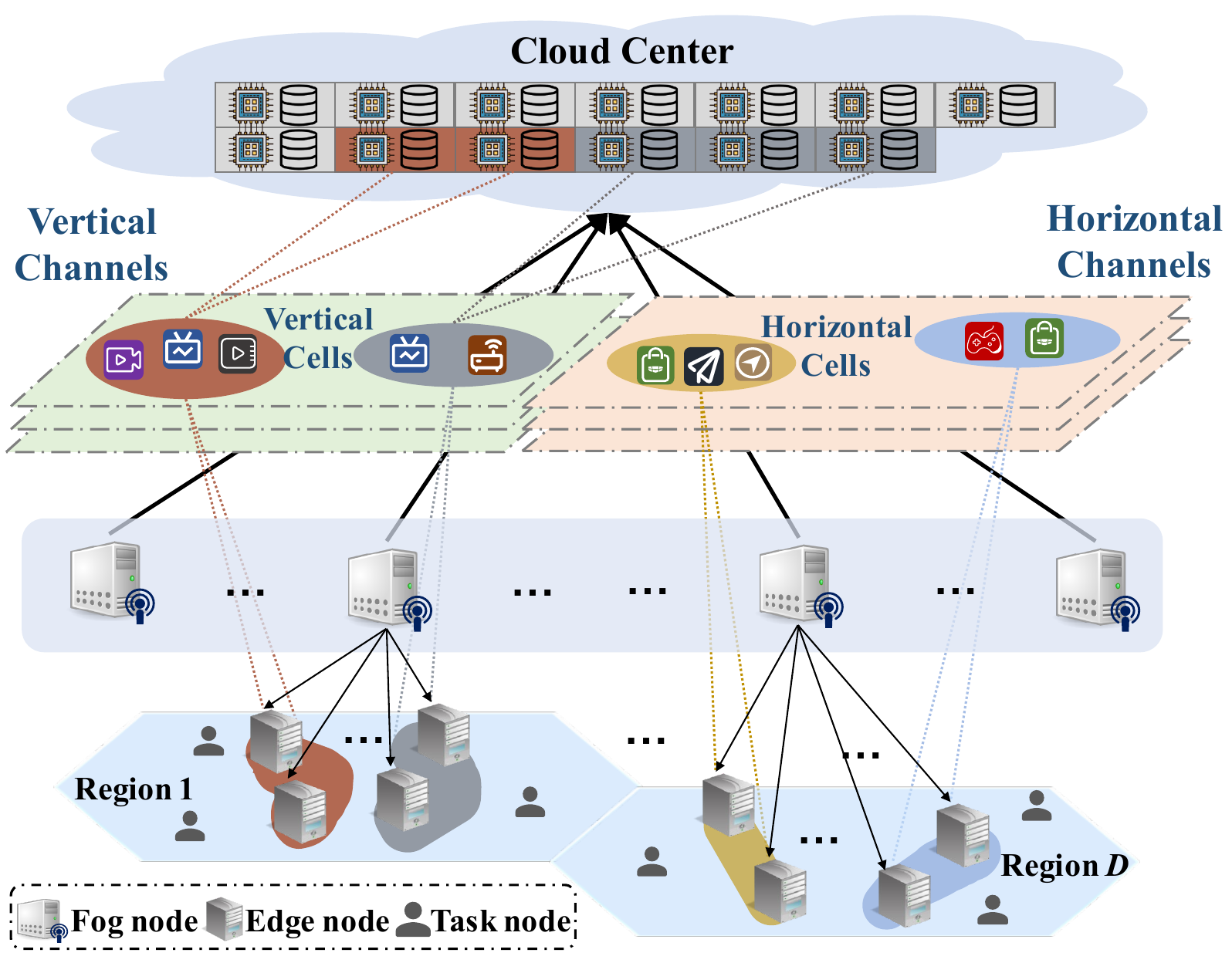}
	\setlength{\abovecaptionskip}{-0.5cm} 
	\caption{Architechture of the multi-tier system.}
	\label{fig:SystemArchitechture}
\end{figure}

In this paper, Table \ref{table-1} lists the main notations we will use. 
First of all, we consider a multi-tier system as shown in Fig. \ref{fig:SystemArchitechture}, which consists of the task node, edge node, fog node and the cloud center.
At the edge of the network, there exist massive heterogeneous edge nodes, and adjacent nodes in certain regions $\mathcal{D}=\{1,...,D\}$ together form edge clusters to provide closer resources. 
\vs{In addition, all edge nodes in each region are managed by the fog node, and the fog node can gather global information in the region to comprehensively manage and make decisions on resources, requests and services in the region.}
To introduce \textit{EdgeMatrix} concisely, we use multiple clusters in one region $d \in \mathcal{D}$ as an example, but \textit{EdgeMatrix} is applicable to other regions $d^\prime \in \mathcal{D}$. 
\vs{In addition, edge nodes and cloud center in a multi-tier system can deploy services to handle requests, but fog nodes only serve as regional decision centers without deploying services. However, \textit{EdgeMatrix} is also applicable for the application scenario where services are deployed in fog nodes, i.e., this part of fog nodes can be considered as edge nodes in this paper for resource redefinition.}

\textbf{\textit{Network edge}}. Geographically dispersed task nodes generate different arrival requests over time that have different SLA priorities $\mathcal{P}=\{1,...,P\}$, and each SLA $p \in \mathcal{P}$ has a service set$\mathcal{L}_{p}=\{1,...,L_{p}\}$. All services with different SLA priorities are denoted by $\mathcal{L}=\mathcal{L}_1\cup \mathcal{L}_2\cup ... \cup \mathcal{L}_P$. For service $l \in \mathcal{L}_p$, we denote the request packet size of each service $l$ as $h_{p,l}$, memory required of service $l$ as $r_{p,l}$, the required computing capacity of service $l$ as $w_{p,l}$, the maximum response time of services with $p$ (the lifecycle of service) as $t_{p,l}$, and the execution time of services with $p$ as $o_{p,l}$. 
We denote the network topology of edge clusters in region $d$ as a graph $G_{d}(\mathcal{V}, \mathcal{E})$, where each $i \in \mathcal{V}$ is the edge node, and $e_{ij} \in \mathcal{E}$ is the link directly connected between node $i$ and node $j$. $\mathcal{N}_{i}=\{j \mid j \in \mathcal{V}, e_{ij} \in \mathcal{E}  \}$ represents the neighborhood where the node $i$ is located, that is, the set of $i$ and its adjacent nodes. The number of edge nodes in cluster $G_{d}$ is denoted as $N$.
In addition, we denote edge node $i$ has the computing capacity $W_{i}$, the total memory $R_{i}$, the total bandwidth $B_{i}$.

\textbf{\textit{Cloud center}}. 
The centralized cloud center has more abundant resources than edge clusters. However, the centralized deployment also makes it difficult to respond quickly to distributed generated requests, resulting in high transmission latency for most requests. Therefore, it is often used to handle the types of tasks with high resource requirements but low latency requirements.

\textbf{\textit{Resource cells and channels}}. 
In order to isolate the negative impact of physical resource heterogeneity, \textit{EdgeMatrix} customizes the resources of different edge-edge devices (Horizontal) and edge-cloud devices (Vertical) into different resource cells. For different resource cells, they are logically independent from each other and can be mapped to a set of physical resources. In addition, service deployment in the \textit{EdgeMatrix} is assigned only to the resource cell instead of pointing the service to a physical device. Further, another concept is defined in \textit{EdgeMatrix}, namely resource channel, which consists of resource channels with similar characteristics. Based on the resource channel, \textit{EdgeMatrix} can better guarantee the SLA priority of services, i.e., provide resources with different characteristics for services with different SLA priorities.
Therefore, we can treat the channels and SLA of services as equivalent to $\mathcal{P}=\{1,...,P\}$. On each channel $p \in \mathcal{P}$, we deploy customized resource cells $\mathcal{M}_{p}=\{1,...,m_{p}\}$ for task nodes according to the SLA of services arriving in the edge cluster. For cell $m \in \mathcal{M}_{p}$, we denote its computing capacity as $W_{p,m}$, and memory size as $R_{p,m}$.

	\subsection{Resource Redefinition}

	\vs{In the multi-tier system, the available resources of edge devices can be better utilized, but a series of challenges are also faced.} In response, we address these challenges by redefining resources based on the multi-tier computing architecture at the physical level. As shown in Fig. \ref{fig:definition}, our definition and purpose of resource redefinition are as follows.

	\begin{figure}[t]%
		\centering
		\includegraphics[width=1.0\linewidth]{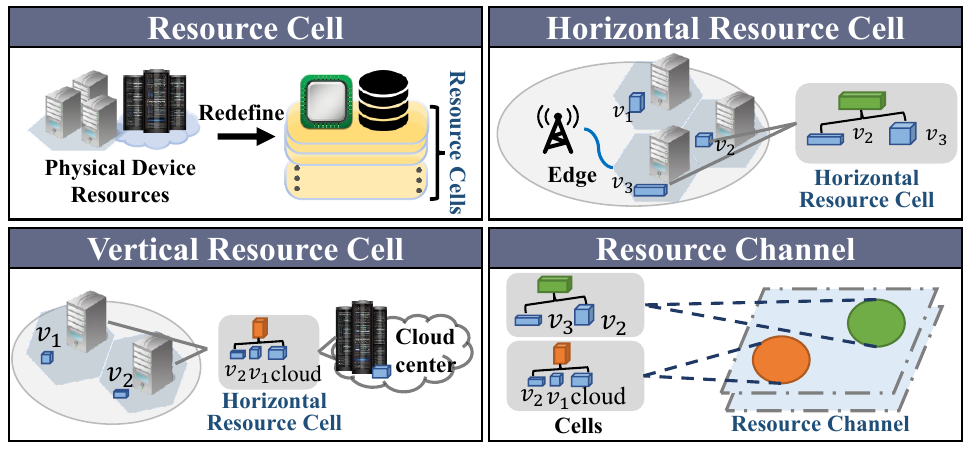}
		\setlength{\abovecaptionskip}{-0.5cm} 
		\caption{The description of the four types of redefined resources.}
		\label{fig:definition}
	\end{figure}
	
	\begin{itemize}[leftmargin=*]
		\item \textbf{\textit{Resource Cell}}. It is a resource set that \textit{EdgeMatrix} maps the physical resources in the multi-tier system. The heterogeneity of the physical resources of the devices makes it difficult to have uniform resource control at the device level. Therefore, the use of resource customizer (introduced in Sec. \ref{subsec:problem statement}) in the \textit{EdgeMatrix} provides a more fine-grained mapping of the physical resources of heterogeneous devices, resulting in a relatively unified resource set. In addition, by periodically adjusting the mapping of resource cells to physical resources through the resource customizer, \textit{EdgeMatrix} can flexibly utilise heterogeneous device resources and adaptively provide services with matching device resources.
	
		\item \textbf{\textit{Horizontal Resource Cell}}. It is a resource cell obtained by mapping the physical resources in multiple neighbouring edge nodes. Since all resources come from the edge close to the task node, it can better meet the latency requirements of the service. However, due to the resource limitations of the edge devices, the horizontal resource cell has a low transmission latency, but it has fewer resources available.
	
		\item \textbf{\textit{Vertical Resource Cell}}. It is a resource cell mapped from the physical resources from the edge clusters and the cloud center. Since some services that rely only on network edge are limited by insufficient resources to meet the requirements, the vertical resource cell can obtain enough physical resources from the cloud center. In this way, although the transmission delay of the service will increase, it can provide more abundant device resources.
	
		\item \textbf{\textit{Resource Channel}}. It is a collection of resource cells with similar characteristics. After mapping the resource cells on the basis of physical resources, \textit{EdgeMatrix} will classify them into several categories through clustering algorithms to further reduce SLA violations for different services. In this case, it can strengthen resource independence between services and reduce resource competition.

	\end{itemize}
\vs{

Based on the above, \textit{EdgeMatrix} can first map services to corresponding resource channels based on SLA priorities and achieve isolation between different services, so as to reduce latency through parallel optimization of each channel. After that, for each channel, JSORD is used to realize the co-optimization between cells, services and requests. Therefore, the system optimization is divided into two aspects by dividing channels and cells. First, the services are matched with the channels at the coarse-grained level to reduce the complexity of the multi-tier system, and then fine-grained collaboration is performed for different cells in each channel to realize the overall optimization of the multi-tier system.
}

\subsection{Problem Statement}
\label{subsec:problem statement}

The objective of \textit{EdgeMatrix} is to reduce SLA violations for various services while maximizing overall throughput. To ensure the robustness of \textit{EdgeMatrix}, we adopt a two-time-scale framework in the multi-tier system to realize resource customization, service orchestration and request dispatch.

At the large time scale, frame $\tau$, \textit{EdgeMatrix} performs two steps to guarantee the SLA priorities of different services: (\textit{$\romannumeral1$}) resource customization, which customizes the resources in the multi-tier system into resource cells according to the states of the system based on MADRL algorithm, and groups cells with similar characteristics into one resource channel using a clustering algorithm; (\textit{$\romannumeral2$}) service orchestration, which allocates the cell resources to service replicas and then binds the service replicas with allocated physical resources.

\begin{figure}[t]%
	\centering
	\includegraphics[width=1.0\linewidth]{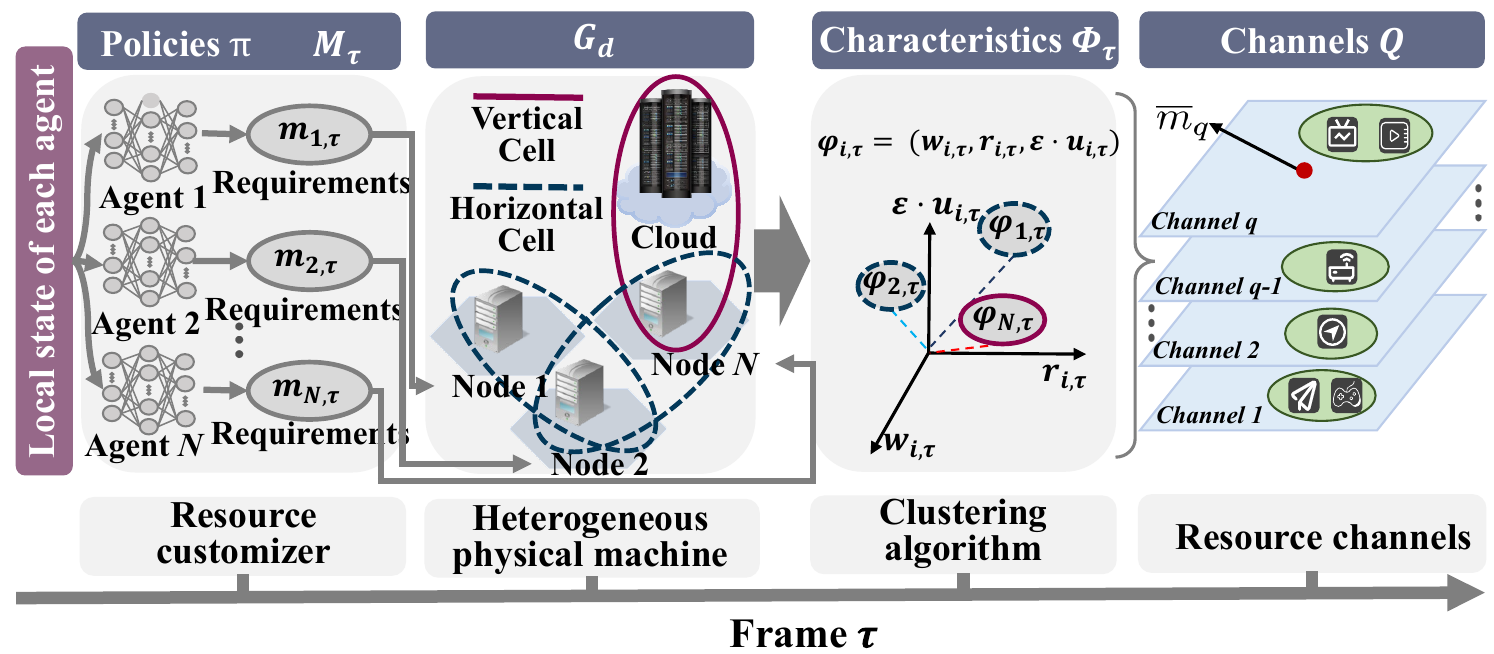}
	\setlength{\abovecaptionskip}{-0.5cm} 
	\caption{The workflow of \textit{resource customizer} at the large time scale.}
	\label{fig:The Workflow of resource customizer}
	\vspace{-0.6em}
\end{figure}

At the small time scale, namely slot $t$, \textit{EdgeMatrix} performs request dispatch to adapt the networked system dynamics. In \textit{EdgeMatrix}, the main components contain \textit{resource customizer}, \textit{service orchestrator} and \textit{request dispatcher}.

\textbf{\textit{Resource customizer}}. The workflow of \textit{resource customizer} is as shown in Fig. \ref{fig:The Workflow of resource customizer}. We deploy a \textit{resource customizer} agent for every node $i$ in the edge cluster. At frame ${\tau}$, all of the agents need to calculate new resource cell's resource requirements through the observed local state and learned resource customized policy $\pi_{i,\tau}$, denote by $M_{\tau}=\{m_{1,\tau},...,m_{i,\tau},...,m_{N,\tau}\}$. For $m_{i,\tau} \in M_{\tau}$, we denote the memory requirement as $R_{m_{i,\tau}}$ and the computing requirement as $W_{m_{i,\tau}}$. Note that, $M$ is different from $\mathcal{M}$ temporarily, each $m_{i,\tau} \in M_{\tau}$ only represents predicted resource requirements at this moment. The \textit{resource customizer} first obtains the available resources from the neighborhood $\mathcal{N}_i$ of the edge node $i$ where the agent is located according to the requirements of the resources required by $m_{i,\tau}$. When the available resources in the neighborhood can meet the requirements of $m_{i,\tau}$, we call it a horizontal resource cell. \textit{Resource Customizer} will obtain the rest resources if needed from the cloud center, namely the vertical resource cell.

After \textit{resource customizer} finishes customizing the physical resources in the multi-tier system into resource cells based on the MADRL algorithm, it has to cluster the resource cells to corresponding resource channels. First, the \textit{resource customizer} abstracts the characteristics from each cell, and then it uses a clustering algorithm to group the resource cells with similar characteristics to one resource channel. Finally, SLA priorities are defined for each channel to provide services to task nodes with a corresponding service level.
Horizontal resource cells have lower transmission latency but limited resources, while vertical resource cells have sufficient resources but higher transmission latency. We then group the cells with similar characteristics into the same channel, and the characteristics of resource cells in \textit{EdgeMatrix} is denoted by $\Phi_{\tau}=\{\varphi_{1,\tau}, ..., \varphi_{i,\tau},...\}$. Specifically, the characteristic of each resource cell $m_{i,\tau}$ is defined by $ \varphi_{i,\tau} = (w_{i,\tau}, r_{i,\tau}, \varepsilon\cdot u_{i,\tau}) $, where $w_{i,\tau}$ and $r_{i,\tau}$ are normalized CPU and memory resources, and $u_{i,\tau}$ is the edge resources proportion. 

This article mainly focuses on the two resources of CPU and memory. Note that the larger value of $u_{i,\tau}$ means the lower latency of $m_{i,\tau}$, and the latency is one of the essential factors affecting SLA priority, so we add the consideration of the weighting factor $\varepsilon$ to $u_{i,\tau}$. The larger the value of each item in $\varphi_{i,\tau}$, the better the performance of corresponding item will be in the resource cell $m_{i,\tau}$. We group the cells $m_{i,\tau} \in M_{\tau}$ based on $\Phi_{\tau}$ into resource channels  $p \in \mathcal{P}$ with corresponding SLA priority using clustering algorithm, i.e., $M_{\tau} \Rightarrow \mathcal{M}_{\mathcal{P}}$. The SLA priority of each channel is denoted by $ \delta_{p} = \sqrt{w_{\overline{m}_{p}}^{2}+r_{\overline{m}_{p}}^{2}+(\varepsilon\cdot u_{\overline{m}_{p}})^{2}} $, where $\overline{m}_{p}$ is the central point of channel $p$, and SLA priority is proportional to the performance of the resource cell.

\textbf{\textit{Service orchestrator}}. To take full advantage of the resources customized by \textit{resource customizer}, the \textit{service orchestrator} needs to appropriately orchestrate service replicas at resource cells of each resource channel. On a resource channel $p \in \mathcal{P}$, a service replica $l_{p}$ is orchestrated to the cell $m_{p}$, can be denoted as $(l,m)$. We define all sets of orchestration as $\mathcal{S} \subseteq \mathcal{L}_{p} \times \mathcal{M}_{p}$, and describe each individual service orchestration as selecting an element from the set. Thus, we can transform the service orchestration problem into a set optimization problem. Note that service orchestration significantly affects request dispatch, and we will explain the relationship between them later.

\textit{\textbf{Request dispatcher}}. After the resource customization and service orchestration is completed, \textit{request dispatcher} will dispatch the requests that reach the nodes to the resource cells with matching service replicas at the small time scale, slot $t$, as shown in Fig. \ref{fig:Section2_2_JSORD_Problem}. The number of requests from task nodes for service $l \in \mathcal{L}_{p}$ arrived at node $i$ at slot $t$ is $\lambda^{t}_{p,l,i}$, and the average number of requests for a frame $\tau$ is denoted by $\lambda^{\tau}_{p,l,i}$.

To achieve the objective of \textit{EdgeMatrix}, we have made the following efforts:  (\textit{$\romannumeral1$}) the policy learned by the \textit{resource customizer} reduces SLA violations for various services (detailed in Sec. \ref{subsec:MADRL for Resource Customization}); (\textit{$\romannumeral2$}) moreover, we jointly consider service orchestration and request dispatch to maximize the overall throughput of the system (i.e., JSORD); and model them together as a mathematical problem because of their strong correlation (a brief introduction in the following, and more details in the Sec. \ref{subsec:Joint Service Orchestration and Request Dispacth}).

Since each channel in $\mathcal{P}$ is similar in terms of handling JSORD, for clarity, one channel $p \in \mathcal{P}$ is used to introduce \textit{EdgeMatrix} in the following discussion. We first set up two decision variables $x$ and $y$, where $x$ is the service orchestration variable and $y$ is the request dispatch variable. More specifically, $x^{\tau}_{p,l,m} \in \left\{0,1\right\}$ is 1 if service $l$ is orchestrated on cell $m$ in frame $\tau$ and 0 otherwise, $y_{p,l,i,m}^{t} \in [0,1]$ represents the probability that a request of service $l$ arrived at edge node $i$ is dispatched to cell $m$ at slot t. For frame $\tau$, we define $y$ as $y_{p,l,i,m}^{\tau} \in [0,1]$. 

We formulate the JSROD as Eq. (1): The object of (1a) is to maximize the number of each channel served requests, $\Psi_{p}=\sum_{l\in\mathcal{L}_{p}}\sum_{i \in \mathcal{V}}\lambda_{p,l,i}\sum_{m\in \mathcal{M}_p}y_{p,l,i,m}$, which equivalent to the system overall throughput because the joint optimization among channels is mutually independent. Constraint (1b) guarantees the request dispatch variable is available. Constraints (1c) and (1d) ensure that each cell's memory and computing capacity can offer the resources required by orchestrated service replicas. Constraint (1e) ensures that $y$ is valid if and only when the service $l$ is orchestrated and won't trigger the SLA, where $\mathbb{I}_{t_{c,l}-o_{c,l}-t_{i,m} \textgreater 0}$ is the indicator function, indicates the SLA priority for the requests, and $t_{i,m}$ is the transmission latency between node $i$ and cell $m$. Constraint (1f) is the available range of values.

\begin{gather}
	{\rm max} \ \Psi_{p}, \qquad   \tag{1a} \\
	\textbf{{\rm s.t.}} \sum_{m\in \mathcal{M}_p} y_{p,l,i,m} \leq 1, \quad\qquad\qquad\qquad\qquad\qquad\quad\quad  \tag{1b}\\
	\ \sum_{l\in \mathcal{L}_p} x_{p,l,m} r_{p,l} \leq R_{p,m}, \quad\qquad\qquad\qquad\qquad\   \tag{1c}\\
	\ \sum_{l\in \mathcal{L}_p} \omega_{p,l} \sum_{i \in \mathcal{V}} \lambda_{p,l,i}y_{p,l,i,m} \leq W_{p,m}, \qquad\qquad\quad  \tag{1d}\\
	y_{p,l,i,m} \leq {\rm min}\{x_{p,l,m}, \mathbb{I}_{t_{p,l}-o_{p,l}-t_{i,m} \textgreater 0}\},  \quad\quad   \tag{1e}\\
	\ x\in \{0, 1\}, y \geq 0, \forall p \in \mathcal{P},\ l\in \mathcal{L}_p,\ i \in \mathcal{V},\ m\in \mathcal{M}_p. \tag{1f}
\end{gather}

\begin{figure}[t]%
	\centering
	\includegraphics[width=1.0\linewidth]{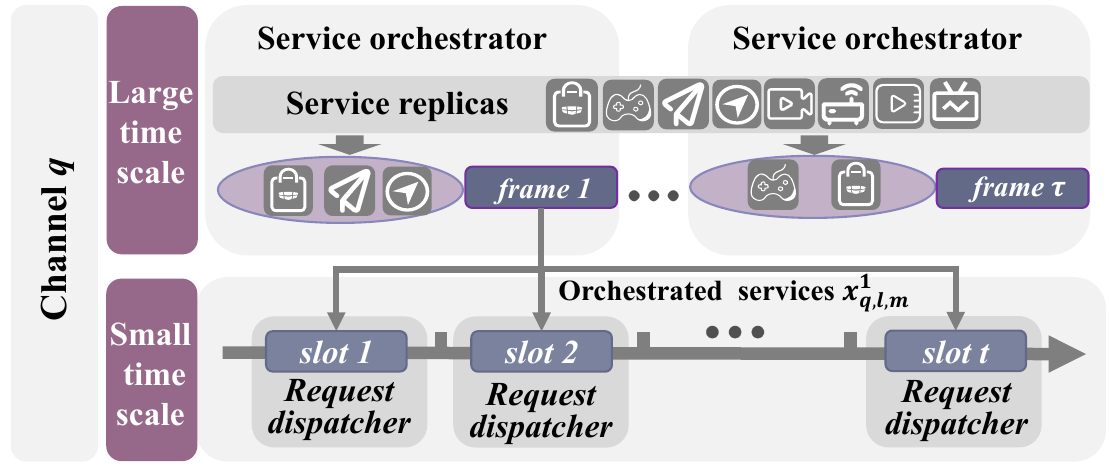}
	\setlength{\abovecaptionskip}{-0.5cm} 
	\caption{The workflow of \textit{service orchestrator} and \textit{request dispatcher} at the two-time-scale framework.}
	\label{fig:Section2_2_JSORD_Problem}
\end{figure}

\section{Algorithm Design}
\label{sec:Algorithm and System Design}

\subsection{MADRL for Resource Customization}
\label{subsec:MADRL for Resource Customization}
The \textit{resource customizer} agent deployed on each node can calculate new resource cells' resource requirements for \textit{EdgeMatrix} based on local state and learned policy. Its objective is to provide customized resource cells for services with different SLA priority under (\textit{$\romannumeral1$}) multi-resource heterogeneous edge nodes and (\textit{$\romannumeral2$}) dynamically changing service requests, which reduces SLA violations for various services.

With the development of artificial intelligence, especially reinforcement learning (RL)\cite{sutton2018} represented by DQN\cite{mnih2015}, DDPG\cite{timothy2016ddpg} and A3C\cite{mnih2016}, game control and robot control are well performed by RL. Due to the large number of computational nodes distributed in the multi-tier system, the direct implementation of these algorithms will cause a high-dimensional action space or non-stationary environment\cite{lowe2017multi,chu2019}. Therefore, we introduce the MADRL algorithm to enable each decision-capable edge node in the system to customize the resources in its network neighborhood into resource cells based on the changing system state. To learn practical resource customization policies in complex networked environments, we must consider (\textit{$\romannumeral1$}) the impact of algorithm training on the robustness of the networked system, (\textit{$\romannumeral2$}) the unsuitability for edge nodes with limited computational power to deploy large models, and (\textit{$\romannumeral3$}) the high-dimensional action space in the decision making of the networked system. Therefore, we adopt an algorithmic framework of offline centralized training and online distributed execution with a continuous action space.

\subsubsection{Markov Game Formulation}

Since the edge cluster of each region $d \in \mathcal{D}$ in \textit{EdgeMatrix} is a graph $G_{d}$, a multi-agent Markov Decision Process (MDP) can be formed as $\rho=(G_{d},\{\hat{\mathcal{S}_{i}},\hat{\mathcal{A}_{i}}\}_{i \in \mathcal{V}},\hat{\mathcal{P}}, \{\hat{\mathcal{R}_{i}}\}_{i \in \mathcal{V}})$. We denote the resource customizer agent on each edge node as $i \in \mathcal{V}$. More details are introduced according to $\rho$ in the following. 

\textbf{\textit{State space $\hat{\mathcal{S}}$}}. At frame $\tau$, the local state space observed by the agent on edge node $i$ is $\hat{s}_{i,\tau}$, which contains: (\textit{$\romannumeral1$}) the number and kinds of requests $(\lambda_{1,1,i}^{\tau}, ..., \lambda_{p,l,i}^{\tau})_{p\in\mathcal{P},l\in\mathcal{L}_{p}}$; (\textit{$\romannumeral2$}) the resource requirements and delay demand of requests arrived at node $i$; (\textit{$\romannumeral3$}) the CPU, memory and edge resources proportion of the existing cells created by agent $i$ in the system; (\textit{$\romannumeral4$}) the available resources of edge nodes where agent $i$ is located, $\mathcal{N}_{i}$. We simply consider the global observation for the training critic as the ensembles of all agents' state, $\bm{\hat{s}}_{\tau}$.

\textbf{\textit{Action space $\hat{\mathcal{A}}$}}. We define the action space of all agents in the edge cluster as a joint action space $\hat{\mathcal{A}}=\{\hat{\mathcal{A}}_{1},...,\hat{\mathcal{A}}_{i},...,\hat{\mathcal{A}}_{N}\}_{i \in \mathcal{V}}$ , where $\hat{\mathcal{A}}_{i} \in \hat{\mathcal{A}}$ represents the action space of agent $i$. At frame $\tau$, agent $i$ predict the action $\hat{a}_{i,\tau}$ according to the observed local state space $\hat{s}_{i,\tau}$ and policy $\pi_{i,\tau}$. Specifically, $\hat{a}_{i,\tau}$ indicates the size of the resource that the agent $i$ predicts allocate to cell $m_{i,\tau}$, i.e., $(W_{m_{i,\tau}}, R_{m_{i,\tau}})$, where both of them are continuous variables with a value range of $[0, 1]$. The actual resource size is $(\alpha \cdot W_{m_{i,\tau}}, \beta \cdot R_{m_{i,\tau}})$, where $\alpha$ and $\beta$ are the upper limits of the cell's resources.

\textbf{\textit{Reward function $\hat{\mathcal{R}}$}}. Agent $i$ inputs the observed local state space $\hat{s}_{i,\tau}$ and selected action $\hat{a}_{i,\tau}$ at frame $\tau$ into the reward function $\hat{\mathcal{R}}$ to get an immediate reward $\hat{r}_{i,\tau}$. To learn how to improve the overall throughput of the system while reducing SLA violations for various services, we comprehensively consider service throughput and SLA priority to help the agent learn this ability in an environment where multi-agents coordinate with each other. The reward function can be formulated as $\hat{r}_{i,\tau}=\sum_{p\in\mathcal{P}}\delta_{p}\sum_{l\in\mathcal{L}_{p}}\Psi'_{l,i,\tau}$, where $\Psi'_{l,i,\tau}=\Psi_{l,i,\tau}/\lambda_{p,l,i}^{\tau}$ is the throughput rate of service $l$ arrived at node $i$ at $\tau$. $\delta_{p}$ indicates the weight of the services with SLA priority $p$, and we verified the necessity of this setting by adjusting $\mathcal{\varepsilon}$ in Sec. \ref{subsec:Setting of Key Parameters}.

\textbf{\textit{State transition function $\hat{\mathcal{P}}$}}. Note that we use a deterministic policy, and the state transition function is denoted as $\hat{\mathcal{P}}(\hat{\mathcal{S}}' \mid \hat{\mathcal{S}},\hat{\mathcal{A}}_{1},...\hat{\mathcal{A}}_{N}): \hat{\mathcal{S}} \times \hat{\mathcal{A}}_{1} \times \ldots \times \hat{\mathcal{A}}_{N} \mapsto \hat{\mathcal{S}}'$.
\begin{figure}[t]%
	\centering
	\includegraphics[width=1.0\linewidth]{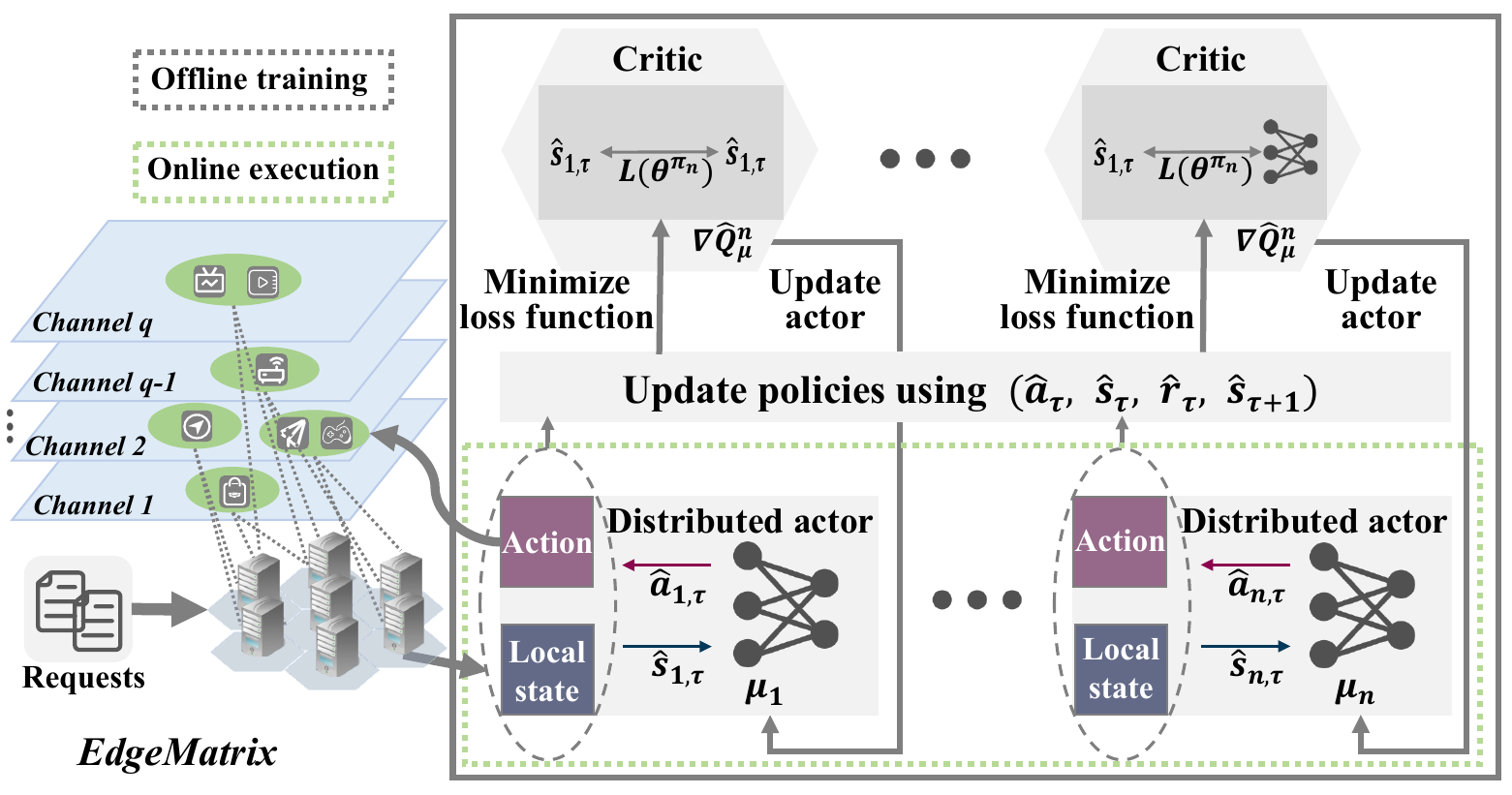}
	\setlength{\abovecaptionskip}{-0.5cm} 
	\caption{Networked multi-agent actor-critic.}
	\label{fig:NMAC}
	\vspace{-0.6em}
\end{figure}

\subsubsection{Networked Multi-agent Actor-Critic}
\label{subsubsec:Coordinated Multi-agent Actor-Critic}
In the networked environment of the multi-tier system, the main problems in designing the MADRL algorithm are: (\textit{$\romannumeral1$}) we need to minimize the requirements of resources for agents' decision-making because the computing nodes at the edge of the network only have limited resources; (\textit{$\romannumeral2$}) reducing SLA violations is the aim of our work, so the training or execution process should not affect the stability and security of the networked system. 

To better deal with the above problems, we proposed the \textit{Networked Multi-agent Actor-Critic (NMAC)} algorithm in the multi-agent coordinate environment of multi-tier system, as shown in Fig. \ref{fig:NMAC}, (\textit{$\romannumeral1$}) centralized critic, which can guide each actor to learn an effective policy according to global observation with extra information during training; (\textit{$\romannumeral2$}) distributed actor, each actor's input during training and execution is local state, so the actor can seamlessly switch between the two phases.
\begin{itemize}[leftmargin=*]
	
	\item \textbf{\textit{Centralized critic}}. During the training process, we equip each agent with a critic to train the actor. For agent $i$, the critic is implemented based on the centralized action-value function $\hat{Q}\left(\bm{\hat{s}}_{\tau}, \bm{\hat{a}}_{\tau} \mid \theta^{\pi_{i}}\right)$, which represents the expected discounted cumulated reward of frame $\tau$ starting from state-action pairs $(\bm{\hat{s}}_{\tau}, \bm{\hat{a}}_{\tau})$ according to the policy $\pi_{i}$, $\bm{\hat{a}}_{\tau}=(...,\hat{a}_{i,\tau},...)$. The action-value function can be represent as $\hat{Q}\left(\bm{\hat{s}}_{\tau}, \bm{\hat{a}}_{\tau} \mid \theta^{\pi_{i}}\right)=\mathbb{E}_{\pi_{i}}\left[R_{i,\tau}\right]$, where $R_{i,\tau}=\hat{r}_{i,\tau}+\sum_{\tau'=\tau+1}^{\mathcal{T}}\gamma^{(\tau'-\tau)}\hat{r}_{i,\tau+1}$. Thus, the centralized action-value function can be obtained from the Bellman Equation:
	\begin{equation}
		\hat{Q}\left(\hat{\bm{s}}_{\tau}, \hat{\bm{a}}_{\tau} \mid \theta^{\pi_{i}}\right)=\hat{r}_{i,\tau}+\gamma \max \limits_{\hat{\bm{a}}_{\tau+1}} \hat{Q}\left(\hat{\bm{s}}_{\tau+1}, \hat{\bm{a}}_{\tau+1} \mid \theta^{\pi_{i}}\right),\tag{2}
	\end{equation}
	\vs{where $\theta^{\pi_{i}}$ is the parameter of policy $\pi_{i}$, $\gamma$ is the reward discount factor. Then, the optimization function can be derived as a loss function between target critic network $\hat{Q}^{\prime}\left(\hat{\bm{s}}_{\tau}, \hat{\bm{a}}_{\tau} \mid \theta^{\pi'_{i}}\right)$ and actual critic network $\hat{Q}_{i,\tau}$, as follows:
	\begin{equation}
		\label{equation:Critic update function}
		\begin{array}{l}
			L\left(\theta^{\pi_{i}}\right)=\mathbb{E}\left[\left(\hat{Q}\left(\hat{\bm{s}}_{\tau}, \hat{\bm{a}}_{\tau} \mid \theta^{\pi_{i}}\right) - \hat{Q}^{\prime}\left(\hat{\bm{s}}_{\tau}, \hat{\bm{a}}_{\tau} \mid \theta^{\pi'_{i}}\right)\right)^{2}\right].\tag{3}
		\end{array}
	\end{equation}}

	\item \textbf{\textit{Distributed actor}}. For each agent, the actor network learns a deterministic policy $\mu_{i}$ to maximize the cumulative reward, i.e., $J=\mathbb{E}_{\mu_{i}}\left[R_{i,\tau}\right]$. We update the parameters $\theta^{\mu_{i}}$ through optimizing policy gradient:
	\begin{multline}
		\label{equation:Actor update function}
		\nabla_{\theta^{\mu_{i}}} J\left(\mu_{i}\right)=\\
		\mathbb{E}\left[\nabla_{\theta^{\mu_{i}}} \log \mu_{i}\left(\hat{a}_{i,\tau} \mid \hat{s}_{i,\tau}\right) \nabla_{a_{i}}\hat{Q}\left(\hat{s}_{i,\tau}, \hat{a}_{i,\tau}\mid\theta^{\mu_{i}}\right)\right]. \tag{4}
	\end{multline}
\end{itemize}

Especially, \textit{NMAC} implements an offline training and online execution framework: (\textit{$\romannumeral1$}) offline-training, can avoid that the training process may have a negative impact on the networked system stability; (\textit{$\romannumeral2$}) online-execution, only requires the actor-network to predict the action and the learned policy only uses local state, which significantly reduces the resources consumed by the agent compared to the training phase.

\begin{algorithm}[t]
	\caption{Solve JSORD Based on Submodular Function Maximization}
	\label{alg:JSORD}
	\KwIn{Input parameters of Eq. (1)\;} %
	\KwOut{Service orchestration variable $x^{\tau}_{p,l}$ and requests dispatch variable $y_{p,l,i,m}^{\tau}$\;}
	
	Initialize  $frame =\tau,\ channel=p$, $\mathcal{S} = \emptyset $, $T = \{e | e \in (\mathcal{L}_p\times \mathcal{M}_p) \setminus \mathcal{S}$, $\mathcal{S}\cup\{e\}$ satisfies constraints of (\ref{equation:7})$\}$ \;
	\While{$ T \neq \emptyset $}  
	{  
		$e^* = $ the element $e$ in $T$ that get the maximum value of \textit{\textbf{$\Omega(\mathcal{S}\cup\{e\}$}}\;
		$\mathcal{S} = \mathcal{S}\cup \{e^*\}$\;
		$T = \{e | e \in (\mathcal{L}_p\times \mathcal{M}_p) \setminus \mathcal{S}$, $\mathcal{S}\cup\{e\}$ satisfies constraints of (\ref{equation:7})\}\;
	}
	Convert $\mathcal{S}$ to its vector representation $x^{\tau}_{p,l}$\;
	Compute $y_{p,l,i,m}^{\tau} = \{..., y_{p,l,i,m}^{t}, ...\}$ using Eq. (\ref{equation:6}) based on orchestrated services $x^{\tau}_{p,l}$\;
	\For{slot $t=0,1,2,...,$}
	{
		Execution request dispatch with dispatching variable $y_{p,l,i,m}^{t}$ at the small time scale.
	}

\end{algorithm}

\subsection{Joint Service Orchestration and Request Dispacth}
\label{subsec:Joint Service Orchestration and Request Dispacth}

Since service orchestration significantly impacts request dispatch, we together consider them as a joint optimization problem, i.e., JSORD. Specifically, (\textit{$\romannumeral1$}) at the large time scale, JSORD has to orchestrate the appropriate services for each cell in \textit{EdgeMatrix} based on the system state, and (\textit{$\romannumeral2$}) at the small time scale, JSORD has to dispatch the requests arriving at each node of the system to resource cells. Considering the system with multiple types of constraints, (e.g., computation, memory, communication, and latency requirements), we solve JSORD based on MILP as shown in Algorithm~\ref{alg:JSORD}. However, the widely distributed edge nodes and the variety of services in the system make the runtime of the solution unacceptable for task nodes.

In \textit{EdgeMatrix}, each channel serves a specific class of services with the same SLA priority. The resources used during service orchestration and request dispatch are the cells of the channel, so the channels are independent of each other. Therefore, we execute JSORD independently in parallel on each channel, which significantly reduces the runtime.

\begin{figure}[t]%
	\centering
	\includegraphics[width=1.0\linewidth]{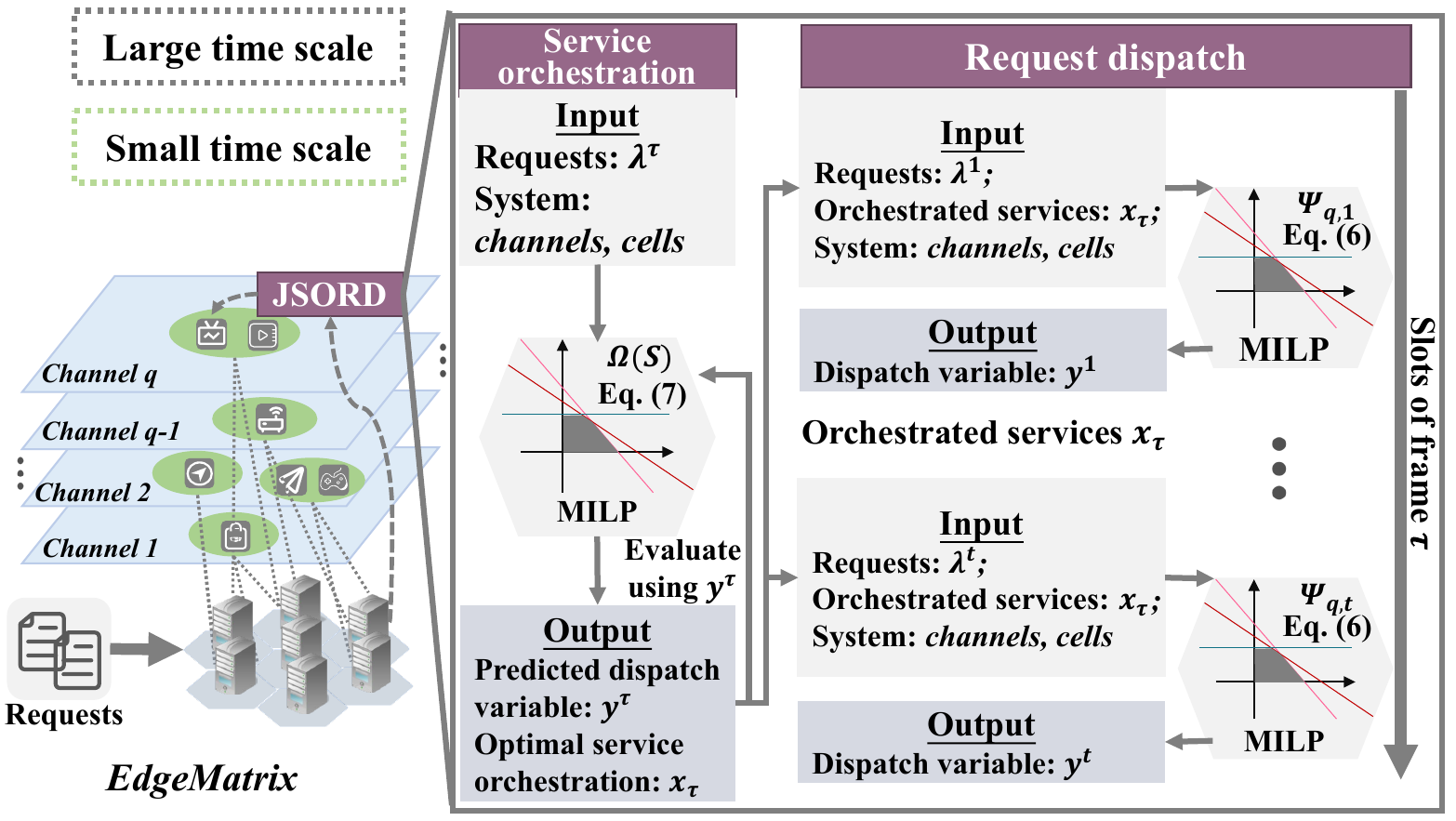}
	\setlength{\abovecaptionskip}{-0.5cm} 
	\caption{Joint service orchestration and request dispatch.}
	\label{fig:JSORD}
	\vspace{-0.6em}
\end{figure}

We model the joint optimization problem of service orchestration and request dispatch as Eq. (1) at the end of section \ref{sec:System Model and Problem Statement}. More specifically, (\textit{$\romannumeral1$}) at the beginning of frame $\tau$, we calculate the optimal service orchestration $x_{p,l}^{\tau}$ based on Eq. (1) by the predicted request dispatching probability $y_{p,l,i,m}^{\tau}$ and the request demand $\lambda_{p,l,i}^{\tau}$; (\textit{$\romannumeral2$}) then at the beginning of slot $t$, we calculate the request dispatch variable $y_{p,l,i,m}^{t}$ with the current request demand $\lambda_{p,l,i}^{t}$ and the orchestrated service $x_{p,l}^{\tau}$ to solve Eq. (1). At the large time scale, note that although the dispatch variable $y_{p,l,i,m}^{\tau}$ is predicted according to the demand $\lambda_{p,l,i}^{\tau}$, $y_{p,l,i,m}^{\tau}$ is used to evaluate Eq. (1a) under the given orchestrated services $x_{p,l}^{\tau}$ rather than request dispatch.

\subsubsection{Solvability Analysis}
We first analyze the solvability of the joint optimization Eq. (1) and then consider the special case of Eq. (1), where the resource cell and the service are homogeneous, with constraints (1d) ignored: %
\begin{gather}
	{\rm max} \ \Psi_{p},  \qquad\qquad\quad \tag{5a}\\
	\textbf{{\rm s.t.}} \ (1b),(1e),(1f), \qquad\qquad\qquad\quad\tag{5b}\\
	\sum_{l\in \mathcal{L}_p} x_{p,l,m} \leq R_{p, m}. \qquad\quad\quad \tag{5c}
\end{gather}

The joint optimization Eq. (1) can be simplified to the 2-Disjointed Set Cover Problem, i.e., Eq. (5), which is proved to be NP-complete\cite{cardei2005}. The special case of the joint optimization, Eq. (1) is NP-hard, which means that the joint optimization problem of service orchestration and request dispatch is also NP-Hard in the general case.

To describe the solution process of Eq. (1) more clearly, we discuss the deformations of Eq. (1) and the solution processes at the two-time-scale framework in the following, i.e., service orchestration and request dispatch, as shown in Fig. \ref{fig:JSORD}.

\subsubsection{Requst Dispatch}
\label{subsubsec:GNN-based System State Encoding}

At slot $t$, the \textit{service orchestrator} has already orchestrated the services on each resource cell, which means we solve the request dispatch problem under the situation that service orchestration variable $x_{p,l}^{\tau}$ is known. Thus, the joint Eq. (1) can be simplified to linear programming, i.e., Eq. (6), which means we can get the probability $y_{p,l,i,m}^{t}$ when dispatching a request of service $l$ arriving at the edge node $i$ to cell $m$. (If the request is successfully dispatched with our constraints (6b)-(6d), we can serve it.)
\begin{gather}
	\label{equation:6}
	{\rm max} \ \Psi_{p},  \qquad\qquad  \tag{6a}\\
	\textbf{{\rm s.t.}} (1b), (1d), (1e),(1f), \qquad\qquad  \tag{6b}\\
	\qquad\  y_{p,l,i,m} \leq \mathbb{I}_{(l,m)\in S}, \qquad\quad\   \tag{6c}\\
	\qquad\  y_{p,l,i,m} \in [0, 1]. \qquad\qquad\quad \tag{6d}
\end{gather}

\subsubsection{Approximation Algorithm for Service Orchestration}

Service orchestration problem can be transformed to a set optimization problem in the Sec. \ref{subsec:problem statement}. On resource channel $p \in \mathcal{P}$, orchestrating a service replica $l_{p}$ to the cell $m_{p}$ can be denoted as $(l,m)$. We can define all orchestration sets as $\mathcal{S} \subseteq \mathcal{L}_{p} \times \mathcal{M}_{p}$, and each single service orchestration can be described as selecting an element from the set. Let $\Omega\left(\mathcal{S}\right)$ denote the optimal objective value of Eq. (1) under a fixed set $\mathcal{S}$ of orchestrated services and a fixed dispatch variable $x$, $(l,m) \in \mathcal{S}$ if and only if $x_{l,m}=1$. This can be calculated by solving the request dispatch problem (see Eq. (6)), and then we can rewrite the problem as:

\begin{algorithm}[t]
	\caption{The Overall Algorithm of \textit{EdgeMatrix}}
	\label{alg:OverallAlgorithm}
	Initialize the system environment\;
	Initialize training parameters\;
	Get the system observation $\hat{s}_{0}$\;
	\For{frame $\tau=0,1,2,...,$}
	{
		Get the actions of each agent $\hat{a}_{\tau}=\left(\hat{a}_{1,\tau},...,\hat{a}_{i,\tau},...\right)$\;
		Resource Customizer execution actions to get the redefined resources, cells and channels\;
		\For{channel $p=0,1,2,...,$}
		{
			Solve \textbf{JSORD} based on Algorithm\ref{alg:JSORD} in parallel on each channel $p\in\mathcal{P}$ \;
		}
		Get reward $\hat{r}_{\tau}$ and next observation $\hat{s}_{\tau+1}$\;
		Store sample data [$\hat{s}_{\tau},\hat{a}_{\tau},\hat{r}_{\tau},\hat{s}_{\tau+1}$] for updating neural network\;
		\If{frame \% update rate $==0$}
		{
			Update the parameters of actor ($\theta^{\mu}$) and critic($\theta^{\pi}$) using Eq. (\ref{equation:Critic update function}) and Eq. (\ref{equation:Actor update function})\;
			Save models.
		}
	}
\end{algorithm}

\begin{gather}
	\label{equation:7}
	{\rm max} \ \Omega(\mathcal{S}), \quad\qquad\  \tag{7a} \\
	\textbf{{\rm s.t.}} \sum_{l:(l, m)\in \mathcal{S}}r_{p,l} \leq R_{p,m},    \qquad \tag{7b}\\
	\qquad \mathcal{S}\subseteq \mathcal{L}_p\times \mathcal{M}_p. \qquad\ \  \tag{7c}
\end{gather}
In summary, the overall training and scheduling process of \textit{EdgeMatrix} is given in Algorithm \ref{alg:OverallAlgorithm}.

	\subsection{Adaptability to Multi-tier System}

	\vs{The multi-tier system has a higher dimensional state space than centralized cloud computing, which makes the problem more challenging.} First, the physical resources of devices are distributed and highly heterogeneous, making the management of resources face greater complexity. In response, \textit{EdgeMatrix} reorganizes the physical resources as resource cells through resource customization and defined channels according to the SLA priority of the service. Therefore, \textit{EdgeMatrix} can isolate the impact of heterogeneous physical resources by redefining physical resources, thus realizing cross-tier management of heterogeneous resources. In addition, \textit{EdgeMatrix} uses MILP to solve JSORD based on redefined resources and enhances the SLA guarantee of services through channels to realize cross-tier management of services and requests. In summary, \textit{EdgeMatrix} can jointly optimize devices, services and requests in multi-tier systems across tiers.
	
\vs{In terms of practical prospects, computing resource providers such as AWS~\cite{Aws_mec} and Azure~\cite{Azure_mec} are converging servers at the network edge based on cloud computing and gradually building commercial multi-tier computing systems. In addition, the idea of Serverless has attracted a lot of attention~\cite{serverless-aws, serverless-cf}, which will make YouTube, Uber and other application service providers only focus on the application itself, while computing resource providers such as AWS and Amazon will design optimization frameworks to ensure the efficient operation of services. Therefore, \textit{EdgeMatrix} will be applicable to computing resource providers, which can help them manage heterogeneous physical resources, orchestrate services from different application service providers, and dispatch user requests associated with services.}
	\subsection{Theoretical Analysis}
	Next, we prove the approximation ratio of the algorithm, before which we need to show that the objective function of Eq. (7) is monotone and sub-modular~\cite{fisher1978}.

	\begin{definition}
		\textit{A set function $f$: $2^{\rm{x}} \rightarrow \mathcal{R}$ is monotone increasing if $\forall \mathcal{S}_1 \subseteq \mathcal{S}_2 \subseteq \rm{x}$, $f(\mathcal{S}_1 ) \leq f(\mathcal{S}_2 )$. Moreover, the function $f(.)$ is sub-modular if $\forall \mathcal{S}_1 \subseteq \mathcal{S}_2 \subseteq \rm{x}$ and $e \in\rm{x} \setminus \mathcal{S}_2$, $f(\{e\} \cup \mathcal{S}_1)- f(\mathcal{S}_1) \geq f(\{e\} \cup \mathcal{S}_2)- f(\mathcal{S}_2).$}
	\end{definition}

	\begin{lemma}
		\label{lemma1}
		\textit{The objective function in (\ref{equation:7}) is a monotone sub-modular function for all feasible $\mathcal{S}$ if }
		\begin{itemize}
			\item [1)]  \textit{$\lfloor R_{p,m}/r_{p,l} \rfloor \leq 1$ for all $m \in \mathcal{M}_p$ and $l \in \mathcal{L}_p$, or}
			
			\item [2)]  \textit{$\sum_{l\in \mathcal{L}_p} \omega_{p,l} \sum_{i \in \mathcal{V}} \lambda_{p,l,i} \leq W_{p,m}$ for all $m \in \mathcal{M}_p$.}
		\end{itemize}
	\end{lemma}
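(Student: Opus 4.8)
The plan is to read $\Omega(\mathcal{S})$ as the optimal value of the request–dispatch LP~(6) with the orchestration fixed so that $y_{p,l,i,m}$ is pinned to $0$ whenever $(l,m)\notin\mathcal{S}$. Monotonicity is then immediate: enlarging $\mathcal{S}$ only deletes some of the upper bounds~(6c) on the variables $y$, i.e.\ it relaxes the feasible polytope of an LP whose objective~(6a) is fixed, so the optimum cannot decrease. The remaining work is the diminishing–returns inequality $\Omega(\mathcal{S}_1\cup\{e\})-\Omega(\mathcal{S}_1)\ge\Omega(\mathcal{S}_2\cup\{e\})-\Omega(\mathcal{S}_2)$ for $\mathcal{S}_1\subseteq\mathcal{S}_2$ and $e=(l,m)\notin\mathcal{S}_2$, and this is where the two hypotheses enter: each of them collapses~(6) to a structurally transparent function.

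Under hypothesis~2) I would first argue that the computing constraint~(1d) is redundant: any $y$ satisfying~(1b) has $y_{p,l,i,m}\le 1$, hence $\sum_{l}\omega_{p,l}\sum_{i}\lambda_{p,l,i}y_{p,l,i,m}\le\sum_{l}\omega_{p,l}\sum_{i}\lambda_{p,l,i}\le W_{p,m}$. With~(1d) dropped, the LP separates across the pairs $(l,i)$ and its optimum is $\Omega(\mathcal{S})=\sum_{l\in\mathcal{L}_p}\sum_{i\in\mathcal{V}}\lambda_{p,l,i}\,\mathbb{I}\big[\,\exists\, m:\ (l,m)\in\mathcal{S}\ \text{and}\ t_{p,l}-o_{p,l}-t_{i,m}>0\,\big]$. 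This is precisely a weighted coverage function on the ground set $\mathcal{L}_p\times\mathcal{M}_p$: element $(l,m)$ ``covers'' the demand set $\{(l,i):t_{p,l}-o_{p,l}-t_{i,m}>0\}$ with weights $\lambda_{p,l,i}$, and $\Omega(\mathcal{S})$ is the total weight covered by $\mathcal{S}$. Coverage functions are monotone submodular~\cite{fisher1978}, which settles this case.

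Under hypothesis~1), $\lfloor R_{p,m}/r_{p,l}\rfloor\le 1$ combined with the feasibility constraint~(7b) forces every cell to host at most one service replica in any feasible $\mathcal{S}$ — two replicas on one cell would already violate~(7b). Hence~(1d) never couples two distinct services, and the LP decomposes as $\Omega(\mathcal{S})=\sum_{l\in\mathcal{L}_p} g_l\!\big(M_l(\mathcal{S})\big)$, where $M_l(\mathcal{S})=\{m:(l,m)\in\mathcal{S}\}$ and $g_l(M)$ is a single–commodity max–flow value: route requests of $l$ from the demand sources $i$ (capacity $\lambda_{p,l,i}$, from~(1b)) to the cells $m\in M$ (capacity $W_{p,m}/\omega_{p,l}$, from~(1d)), using only the arcs $(i,m)$ with $t_{p,l}-o_{p,l}-t_{i,m}>0$. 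The max–flow value into a capacitated sink set is a monotone submodular function of that set: the vectors of flow absorbed at the cells form a polymatroid (intersection of the projected flow polytope with the capacity box), and $g_l$ is its rank function; equivalently one can give an augmenting–path exchange argument or uncross minimum cuts. Since $\mathcal{S}\mapsto M_l(\mathcal{S})$ is a coordinate restriction and a sum of submodular functions of such restrictions is again submodular, $\Omega$ is monotone submodular, which completes the proof.

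The two reductions are routine; the step requiring genuine care is the submodularity of the flow value $g_l(\cdot)$ in hypothesis~1), for which the polymatroid/uncrossing route is the cleanest, and one must state explicitly that the service–wise decomposition of~(6) is legitimate exactly because~(7b) under hypothesis~1) forbids two replicas from sharing a cell. Hypothesis~2) is the easier of the two: it simply erases the only constraint that couples cells across services and leaves a textbook weighted–coverage function.
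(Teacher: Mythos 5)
Your proof is correct, but it follows a genuinely different route from the paper's. The paper handles both hypotheses at once by a direct exchange argument on optimal dispatch solutions: it fixes optimal $\textbf{y}^{(0)},\dots,\textbf{y}^{(3)}$ (with the crucial tie-breaking choice that the solutions for $\mathcal{S}_1\cup\{(l_1,m_1)\}$ and $\mathcal{S}_2\cup\{(l_1,m_1)\}$ minimize the traffic sent to the new replica), decomposes the marginal gains as in Eqs.~(13)--(14), and argues that under either condition there is ``no contention of computation resources between replicas,'' so the first terms vanish and the new replica's own traffic can only shrink when it is added to the larger set. You instead split by hypothesis and reduce each case to a known submodular structure: under condition~2) the computing constraint (1d) is redundant and $\Omega$ collapses to a weighted coverage function; under condition~1) you observe (correctly, since $\lfloor R_{p,m}/r_{p,l}\rfloor\le 1$ means $r_{p,l}>R_{p,m}/2$, so (7b) forbids two replicas per cell) that the LP decomposes service-by-service into transportation/max-flow values, which are submodular in the sink set by a polymatroid argument. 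Note that your one-replica-per-cell observation is exactly the ``no computation contention'' fact the paper leans on, so the structural insight is shared. What each approach buys: the paper's argument is shorter and uniform across both conditions, but it is informal at its key step (the vanishing of the first terms and the final inequality are asserted from the tie-breaking choice rather than proved in detail); your reduction is more self-contained and mechanically checkable, at the price of a case split and of invoking the standard-but-nontrivial fact that max-flow into a capacitated sink set is a submodular set function, which you only sketch (the polymatroid/uncrossing argument should be cited or spelled out if this were to replace the paper's proof). Both proofs, like the paper's statement, only establish the diminishing-returns inequality over feasible sets, which is consistent with how Lemma~\ref{lemma1} is used in Theorem~\ref{theorem1}.
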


	\begin{proof}
		It is easy to judge that $\Omega(\mathcal{S})$ is monotone because adding an element to $\mathcal{S}$ will relax the constraint (6c), thereby expand the solution space for Eq. (6) and increase its optimal objective value.
		
		For proving the objective function of Eq. (7) is sub-modular, we need to show that for any sets $\mathcal{S}_1, \mathcal{S}_2 \subseteq \mathcal{L}_p \times \mathcal{M}_p$ and any $(l_1, m_1 ) \in (\mathcal{L}_p \times \mathcal{M}_p) \setminus \mathcal{S}_2$, such that $\mathcal{S}_1 \subseteq \mathcal{S}_2$ and $\mathcal{S}_2 \cup \{(l_1, m_1)\}$ is feasible, the following relationship holds
		\begin{equation}
			\label{equation:8}
			\Omega(\mathcal{S}_1 \cup \{(l_1, m_1)\}) - \Omega(\mathcal{S}_1) \geq \Omega(\mathcal{S}_2 \cup \{(l_1, m_1)\}) - \Omega(\mathcal{S}_2) \tag {8}
		\end{equation}

		We suppose  $\textbf{y}^{(0)}$ and $\textbf{y}^{(2)}$ are the optimal dispatching solutions according to Eq. (6) under service orchestrations $\mathcal{S}_1$ and $\mathcal{S}_2$.
		In addition, we suppose $\textbf{y}^{(1)}$ and $\textbf{y}^{(3)}$ are the optimal dispatching solutions under service orchestrations $\mathcal{S}_1 \cup \{(l_1, m_1)\}$ and $\mathcal{S}_2 \cup \{(l_1, m_1)\}$, respectively, that minimize $\sum_{i\in \mathcal{V}} \lambda_{p,l_1,i} y_{p,l_1,i,m_1}$. Then the objective function can be decomposed as:
		\begin{equation}
			\label{equation:9}
			\Omega(\mathcal{S}_1) = \sum_{(l,m) \in \mathcal{S}_1} \sum_{i\in \mathcal{V}} \lambda_{p,l,i}y_{p,l,i,m}^{(0)}  \tag {9}
		\end{equation}
		
		\begin{equation}
			\label{equation:10}
			\begin{array}{l}
				\Omega(\mathcal{S}_1 \cup \{(l_1, m_1)\}) =\\ \sum_{(l,m) \in \mathcal{S}_1} \sum_{i\in \mathcal{V}} \lambda_{p,l,i}y_{p,l,i,m}^{(1)}+\sum_{i\in V} \lambda_{p,l_1,i}y_{p,l_1,i,m_1}^{(1)} \tag {10}
			\end{array}
		\end{equation}
		
		\begin{equation}
			\label{equation:11}
			\Omega(\mathcal{S}_2) = \sum_{(l,m) \in \mathcal{S}_2} \sum_{i\in \mathcal{V}} \lambda_{p,l,i}y_{p,l,i,m}^{(2)} \tag {11}
		\end{equation}
		
		\begin{equation}
			\label{equation:12}
			\begin{array}{l}
				\Omega(\mathcal{S}_2 \cup \{(l_1, m_1)\}) =\\ \sum_{(l,m) \in \mathcal{S}_2} \sum_{i\in \mathcal{V}} \lambda_{p,l,i}y_{p,l,i,m}^{(3)}+\sum_{i\in V} \lambda_{p,l_1,i}y_{p,l_1,i,m_1}^{(3)} \tag {12}
			\end{array}
		\end{equation}
		
		Due to this decomposition, we have
		\begin{equation}
			\label{equation:13}
			\begin{array}{l}
				\Omega(\mathcal{S}_1 \cup \{(l_1, m_1)\}) - \Omega(\mathcal{S}_1)= \\ \sum_{(l,m) \in \mathcal{S}_1} \sum_{i\in \mathcal{V}} \lambda_{p,l,i}(y_{p,l,i,m}^{(1)}-y_{p,l,i,m}^{(0)})\\+\sum_{i\in V} \lambda_{p,l_1,i}y_{p,l_1,i,m_1}^{(1)} \tag {13}
			\end{array}
		\end{equation}
		
		\begin{equation}
			\label{equation:14}
			\begin{array}{l}
				\Omega(\mathcal{S}_2 \cup \{(l_1, m_1)\}) - \Omega(\mathcal{S}_2)= \\ \sum_{(l,m) \in \mathcal{S}_2} \sum_{i\in \mathcal{V}} \lambda_{p,l,i}(y_{p,l,i,m}^{(3)}-y_{p,l,i,m}^{(2)})\\+\sum_{i\in V} \lambda_{p,l_1,i}y_{p,l_1,i,m_1}^{(3)} \tag {14}
			\end{array}
		\end{equation}
	
		By in Lemma~\ref{lemma1}, there is no contention of computation resources between replicas, and hence replicas in $\mathcal{S}_1$ can still process requests dispatched to them under $\textbf{y}^{(0)}$. Thus, the first term in (\ref{equation:13}) is zero. Similarly, the first term in (\ref{equation:14}) is also zero. As there is no computation resource contention between replicas, requests that used to be served by replicas in $\mathcal{S}_1$ under service orchestration $\mathcal{S}_1 \cup \{(l_1, m_1)\}$ can still be served there after adding replicas in $\mathcal{S}_2 \setminus \mathcal{S}_1$, but these added replicas may offload some requests that used to be served by the replica $(l_1, m_1)$. Therefore, $\sum_{i\in V}\lambda_{p,l_1,i}y_{p,l_1,i,m_1}^{(1)}\geq\sum_{i\in V}\lambda_{p,l_1,i}y_{p,l_1,i,m_1}^{(3)}$ . This proves Eq. (8) and the sub-modularity of objective function in (\ref{equation:7}).
	\end{proof}
	The constraints of Eq. (7) also have a desirable property.~\cite{gupta2010}
	
	\begin{definition}
		\textit{Let $X$ be a universe of elements. Consider a collection $\mathcal{I} \subseteq 2^X$ of subsets of $X$. $(X, \mathcal{I})$ is called an independence system if: (a) $\emptyset \in \mathcal{I}$, and (b) if $Z \in \mathcal{I}$ and $Y \subseteq Z$, then $Y \in \mathcal{I}$ as well. The subsets in $\mathcal{I}$ are called independent; for any set $S$ of elements, an inclusion-wise maximal subset $T$ of $S$ that is in $\mathcal{I}$ is called a basis of $S$.}
	\end{definition}
	
	\begin{definition}\textit{Given an independence system $(X, \mathcal{I})$ and a subset $S \subseteq X$, the rank $r(S)$ is defined as the cardinality of the largest basis of $S$, and the lower rank $ \rho (S)$ is the cardinality of the smallest basis of $S$. The independence system is called a $p$-independence system (or a $p$-system) if max$_{S \subseteq X}$ $\frac{r(S)}{\rho(S)} \leq p$.}
	\end{definition}
	
	\begin{lemma}
		\label{lemma2}
		\textit{The constraints (7b) and (7c) form a $p$-independence system for} $p=\lceil \frac{\textrm{max} \ r_{p,l}}{\textrm{min}_{l: r_{p,l} \textgreater 0} \ r_{p,l}} \rceil$.
	\end{lemma}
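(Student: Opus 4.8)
The plan is to identify the constraints (7b) and (7c) with the independence system on ground set $X=\mathcal{L}_p\times\mathcal{M}_p$ (this is (7c)) whose independent sets are exactly the $\mathcal{S}\subseteq X$ obeying (7b). First I would check this is an independence system: $\emptyset$ trivially satisfies (7b), and since the left-hand side of (7b) only grows when elements are added, every subset of an independent set is independent, so $\mathcal{I}$ is downward closed. It then remains to bound $\max_{S\subseteq X} r(S)/\rho(S)$ by $p$.

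The key structural observation is that (7b) is a \emph{separate} knapsack constraint $\sum_{l:(l,m)\in\mathcal{S}}r_{p,l}\le R_{p,m}$ for each cell $m$, involving only the slice $S_m=\{(l,m)\in S\}$. Thus $(X,\mathcal{I})$ is the direct sum of $|\mathcal{M}_p|$ knapsack systems on pairwise-disjoint ground sets. For a direct sum, a basis of $S$ restricts to a basis of every slice $S_m$ (otherwise an addable element in some slice would extend the basis), so $r(S)=\sum_m r(S_m)$ and $\rho(S)=\sum_m\rho(S_m)$, where slices with $\rho(S_m)=0$ also have $r(S_m)=0$ and may be dropped. Since $\frac{\sum_m a_m}{\sum_m b_m}\le\max_m\frac{a_m}{b_m}$ for positive $b_m$, it suffices to prove $r(S)\le p\,\rho(S)$ for a single knapsack constraint.

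For one cell, regard the problem as a knapsack on items $\mathcal{L}_p$ with sizes $r_{p,l}$ and capacity $R_{p,m}$; write $r_{\max}=\max_l r_{p,l}$ and $r_{\min}=\min_{l:r_{p,l}>0}r_{p,l}$. Items of size $0$ lie in every maximal feasible set, so they sit in both the largest basis $C$ and the smallest basis $B$ of $S$ and may be discarded (this only lowers the ratio, as $r(S)\ge\rho(S)$); assume all sizes positive. If $C\subseteq B$ then $|C|\le|B|$. Otherwise pick $l^*\in S\setminus B$ of minimum size $r^*$; maximality of $B$ gives $\sum_{l\in B}r_{p,l}+r^*>R_{p,m}$, hence $R_{p,m}<|B|\,r_{\max}+r^*$. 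Splitting $C=(C\cap B)\cup(C\setminus B)$ and bounding item sizes below by $r_{\min}$ on $C\cap B$ and by $r^*$ on $C\setminus B$, feasibility $\sum_{l\in C}r_{p,l}\le R_{p,m}$ gives, after cancelling one copy of $r^*$ (valid since $|C\setminus B|\ge1$) and using $r^*\ge r_{\min}$, that $(|C|-1)\,r_{\min}<|B|\,r_{\max}$. Hence $|C|<|B|\,r_{\max}/r_{\min}+1$, and since $|C|$ is an integer, $|C|\le\lceil|B|\,r_{\max}/r_{\min}\rceil\le|B|\lceil r_{\max}/r_{\min}\rceil=p|B|$, the required bound.

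The genuine obstacle is squeezing the rounding in this last step: the naive chain $|C|\,r_{\min}\le R_{p,m}<(|B|+1)\,r_{\max}$ only yields $|C|\le p(|B|+1)-1$, which is too large by $p-1$. The fix is to retain the $|C\cap B|\,r_{\min}$ contribution when lower-bounding $\sum_{l\in C}r_{p,l}$ instead of discarding it, which is precisely what absorbs the stray $+1$. Everything else --- the independence-system axioms, the direct-sum decomposition and its effect on $r(\cdot)$ and $\rho(\cdot)$, the mediant inequality, and the elimination of zero-size services --- is routine bookkeeping.
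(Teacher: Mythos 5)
Your proof is correct, but it verifies the $p$-system property by a different route than the paper. The paper's proof is the classical exchange argument: given two bases of a common set, adding any pair $(l,m)$ from one basis to the other forces the removal of at most $\lceil r_{\max}/\min_{l:r_{p,l}>0} r_{p,l}\rceil$ pairs from cell $m$'s knapsack to restore (7b), and repeating these swaps shows the cardinality can shrink by at most a $p$-fold factor. You instead bound the rank/lower-rank ratio directly from the definition: you exploit that (7b) splits into independent per-cell knapsacks, so the system is a direct sum and the mediant inequality reduces the claim to a single knapsack, where you compare the largest basis $C$ and smallest basis $B$ via the size-counting inequality $(|C|-1)\,r_{\min} < |B|\,r_{\max}$ (obtained from maximality of $B$ and feasibility of $C$, with the $C\cap B$ term retained to absorb the extra additive constant) and then round carefully to get $|C|\le p|B|$; picking $l^*$ in $S\setminus B$ rather than $C\setminus B$ is harmless since $C\setminus B\subseteq S\setminus B$, and your treatment of zero-size services and empty-basis slices closes the degenerate cases. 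What each buys: the paper's swap argument is shorter and matches the standard formulation used when invoking the greedy $1/(p+1)$ guarantee of Fisher et al., but its "repeating these swaps reduces the count by at most $p$-fold" step is left informal; your direct-sum plus counting argument is more self-contained and makes the cardinality bookkeeping (including the rounding step that a naive capacity comparison would lose) fully explicit, at the cost of more case analysis.
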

	
	\begin{proof}
		According to definition 1, $(\mathcal{L}_p \times \mathcal{M}_p, \mathcal{I})$, where $\mathcal{I} \subseteq 2^{\mathcal{L}_p \times \mathcal{M}_p}$ is a set of all feasible solutions to Eq. (7), is an independent system, as the subset of any feasible service orchestration remains feasible. We suppose any $\mathcal{S} \subseteq \mathcal{L}_p \times \mathcal{M}_p$ and any two maximal feasible service orchestrations $\mathcal{S}_1, \mathcal{S}_2$, where $\mathcal{S}_1 \subseteq \mathcal{S}_2$. To add a pair $(l, m) \in \mathcal{S}_2 \setminus \mathcal{S}_1$ to $\mathcal{S}_1$, we need to take out a set $\mathcal{S}^{'}$ of pairs from $\mathcal{S}_1$ , such that $(\mathcal{S}_1 \setminus \mathcal{S}^{'}) \cup \{(l, m)\}$ remains a feasible service orchestration. The set $\mathcal{S}^{'}$ contains at most $\lceil \frac{\textrm{max} \ r_{p,l}}{\textrm{min}_{l: r_{p,l} \textgreater 0} \ r_{p,l}} \rceil$ pairs corresponding to removing service replicas from resource cell $m$ to satisfy constraint (7b). In the process of modifying $\mathcal{S}_1$ into $\mathcal{S}_2$ by repeating these swaps, we can reduce the number of orchestrated service replicas by at most $p$-fold. Consequently, the constraints in Eq. (7) form a $p$-independent system.
	\end{proof}
	
	Combining Lemmas~\ref{lemma1} and~\ref{lemma2} gives the following result.
	
	\begin{theorem}
		\label{theorem1}
		\textit{Under the conditions in Lemma ~\ref{lemma1},} Solve JSORD Based on Submodular Function Maximization \textit{(Algorithm 1) yields a} 1/(1 + $p$)-\textit{approximation for Eq. (1), where} $p=\lceil \frac{\textrm{max} \ r_{p,l}}{\textrm{min}_{l: r_{p,l} \textgreater 0} \ r_{p,l}} \rceil$.
	\end{theorem}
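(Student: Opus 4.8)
The plan is to obtain Theorem~\ref{theorem1} by feeding the two structural facts from Lemmas~\ref{lemma1} and~\ref{lemma2} into the classical greedy guarantee for maximizing a monotone submodular function subject to a $p$-system~\cite{fisher1978}. First I would identify Algorithm~\ref{alg:JSORD} with the greedy procedure for problem~(\ref{equation:7}): it starts from $\mathcal{S}=\emptyset$, and at each iteration it augments the current orchestration set by the element $e^{*}$ maximizing $\Omega(\mathcal{S}\cup\{e\})$ over all $e$ for which $\mathcal{S}\cup\{e\}$ still satisfies the capacity constraint~(7b) (equivalently, over the elements that keep $\mathcal{S}$ independent in the system of Lemma~\ref{lemma2}), halting when no feasible extension remains, i.e.\ when $\mathcal{S}$ is a basis. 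Since $\Omega(\mathcal{S})$ does not depend on the candidate $e$, maximizing $\Omega(\mathcal{S}\cup\{e\})$ coincides with maximizing the marginal gain $\Omega(\mathcal{S}\cup\{e\})-\Omega(\mathcal{S})$; monotonicity of $\Omega$ (Lemma~\ref{lemma1}) further makes every such gain nonnegative, so ``add feasible elements until $T=\emptyset$'' is exactly the greedy algorithm analyzed in~\cite{fisher1978}.

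Next I would line up the hypotheses. Under either condition~1) or condition~2) of Lemma~\ref{lemma1}, $\Omega:2^{\mathcal{L}_p\times\mathcal{M}_p}\to\mathcal{R}$ is monotone increasing and submodular, and it is nonnegative because $\Omega(\mathcal{S})\ge\Omega(\emptyset)\ge 0$. By Lemma~\ref{lemma2}, the family of feasible sets of~(\ref{equation:7}) is a $p$-independence system with $p=\lceil \max r_{p,l}/\min_{l:r_{p,l}>0} r_{p,l}\rceil$. The Fisher--Nemhauser--Wolsey bound then asserts that greedy maximization of a nonnegative monotone submodular function over a $p$-system returns a solution of value at least $\tfrac{1}{1+p}$ times the constrained optimum; applied to $\Omega$ over~(\ref{equation:7}) it yields
\[
\Omega(\mathcal{S}_{\mathrm{greedy}})\ \ge\ \frac{1}{1+p}\,\Omega(\mathcal{S}^{*}),
\]
where $\mathcal{S}^{*}$ is an optimal feasible orchestration for~(\ref{equation:7}).

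Finally I would transfer the bound to the joint problem~(1). By construction $\Omega(\mathcal{S})$ is the optimal value of~(1) with the orchestration variable fixed to the indicator of $\mathcal{S}$ --- equivalently the optimum of the request-dispatch program~(\ref{equation:6}) given $\mathcal{S}$ --- and the feasibility constraint~(7b) is precisely~(1c); hence $\max_{\mathcal{S}\ \mathrm{feasible}}\Omega(\mathcal{S})$ equals the optimum of~(1), so $\Omega(\mathcal{S}^{*})$ is the optimal throughput for channel $p$. Since Algorithm~\ref{alg:JSORD} outputs $x^{\tau}_{p,l}$ as the vector form of $\mathcal{S}_{\mathrm{greedy}}$ and then solves~(\ref{equation:6}) for the matching dispatch $y^{\tau}_{p,l,i,m}$, the value it realizes for~(1) is exactly $\Omega(\mathcal{S}_{\mathrm{greedy}})$, which by the displayed inequality is at least $\tfrac{1}{1+p}$ of the optimum --- the claimed $1/(1+p)$-approximation.

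I expect the main obstacle to be bookkeeping rather than a hard inequality: one must check that the greedy stopping rule terminates at a genuine basis of the $p$-system (so the $\tfrac{1}{1+p}$ guarantee is applicable), that restricting each step to feasibility-preserving augmentations never rules out reaching a near-optimal basis, and that the passage from the set-function optimum $\max_{\mathcal{S}}\Omega(\mathcal{S})$ back to~(1) is lossless --- which is where the argument could silently break if the premises of~\cite{fisher1978} (in particular nonnegativity and monotonicity of $\Omega$, and the $p$-system property) were not matched exactly. Given Lemmas~\ref{lemma1} and~\ref{lemma2}, each of these checks is routine.
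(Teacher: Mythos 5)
Your proposal is correct and follows essentially the same route as the paper, which simply invokes the greedy guarantee of \cite{fisher1978} for maximizing a monotone submodular function over a $p$-independence system, combining Lemmas~\ref{lemma1} and~\ref{lemma2} exactly as you do. Your additional bookkeeping (nonnegativity, the basis stopping rule, and the lossless passage from $\max_{\mathcal{S}}\Omega(\mathcal{S})$ back to Eq.~(1)) only makes explicit what the paper leaves implicit.
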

	
	\begin{proof}
		A greedy algorithm for the monotonic sub-modular case that works for p-independence systems is presented in \cite{fisher1978}, where the approximation ratio is 1/($p$ +1).
	\end{proof}

\begin{figure*}[t]%
	\centering
	\includegraphics[width=0.8\linewidth]{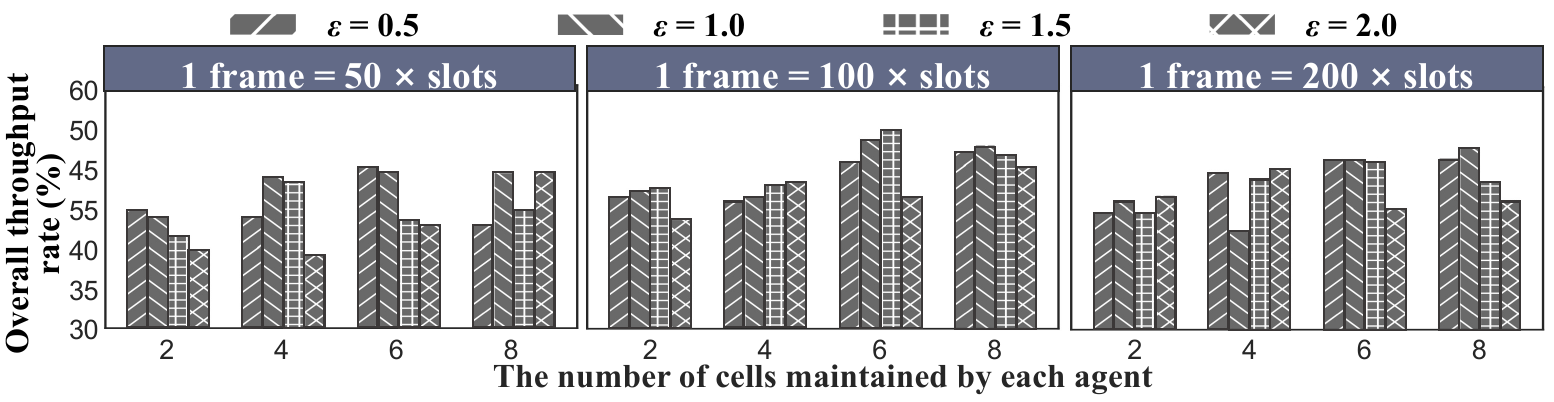}
	\caption{\textit{EdgeMatrix}'s training performance under various parameters.}
	\setlength{\abovecaptionskip}{-0.5cm} 
	\label{fig:ParametersIdentify}
	\vspace{-1.2em}
\end{figure*}

\section{Implementation}
\label{sec:System Design and Implementation}

\subsection{Multi-tier System Setup}
\label{subsec:End-cloud-edge Computing System Setup}

\textbf{\textit{Physical resources}}. In the simulation experiment, we set up $10$ edge nodes and $6$ SLA Allocation/Retention Priority (ARP) services\cite{3GPPwhitepaper} in the edge cluster of a region. \vs{In addition, a fog node is set up for management, which is not used for deploying services but only for aggregating global observations and deploying optimization algorithms. At runtime, each edge node periodically aggregates its own observations to the fog node according to the two-time-scale framework, and then the agent in the fog node will make decisions and send them to the corresponding edge node for execution.} Each property of the edge node $i \in \mathcal{V}$ is set to $W_{i}=\left[2,4\right]$ vCPUs, $R_{i}=\left[100, 200\right]$ GB and $B_{i}=\{125,12.5\}$ Mbps. We assume that the computing capacity $W_{\rm{cloud}}$ and memory capacity $R_{\rm{cloud}}$ of cloud center are always sufficient, and the connection between the cloud center and the edge cluster is reliable, so the transmission delay from the edge node to the cloud center is set to a constant $L_{\rm{edge}}^{\rm{cloud0}}=10 ms$.

\textbf{\textit{Logically resources}}. By default, at each frame $\tau$, the number of resource cells that each agent can maintain is $\sum_{p\in \mathcal{P}}m_{p,i} \geq 1$. The requirements of each resource cell are predicted by the continuous action $\hat{a}_{i,\tau}=\{W_{m_{i,\tau}}, R_{m_{i,\tau}}\}$. These two numbers are floats between $[0,1]$, and their true resource requirements are multiplied by the scalar in the system, i.e., $(2\ \rm{vCPUs}, 500\ \rm{MB})$. These resource cells are classified to the corresponding resource channels according to the cell characteristics $\Phi$. There are $[2,4]$ kinds of service on each channel, and the number of resource channels is $6$.

\textbf{\textit{Service and request}}. \vs{Our system's data value range and request distribution are based on the Alibaba Cluster Trace\cite{Alibabacluster} and the Google Cluster Trace~\cite{google} to ensure \textit{EdgeMatrix} has effective performance in the real environment.} However, since the dataset does not reflect the delay characteristics, we refer to ETSI's white paper\cite{3GPPwhitepaper} to determine the delay data range of requests.

\subsection{Training Settings}
\label{subsec:Training Settings}

We implement Algorithm 2 using python 3.6 and set up the details as follows. Each NMAC agent consists of a critic network and an actor policy network, and the network is trained with a fixed learning rate of $\eta=0.01$ and the reward discount factor of $\gamma=0.95$. 
The critic network consists of a three-layer fully connected neural network and has 64 neurons per layer, where the activation function is relu in the first two hidden layers and no activation function in the output layer. 
Similarly, the actor policy network is also consists of a three-layer fully connected neural network and has 64 neurons per layer, with the activation function is relu for the first two layers and the sigmoid for the output layer to ensure that the output is in the valid range. In addition, the linear program solver in JSORD uses linprog function of the SciPy library.

\begin{figure}[t]%
	\centering
	\includegraphics[width=1.0\linewidth]{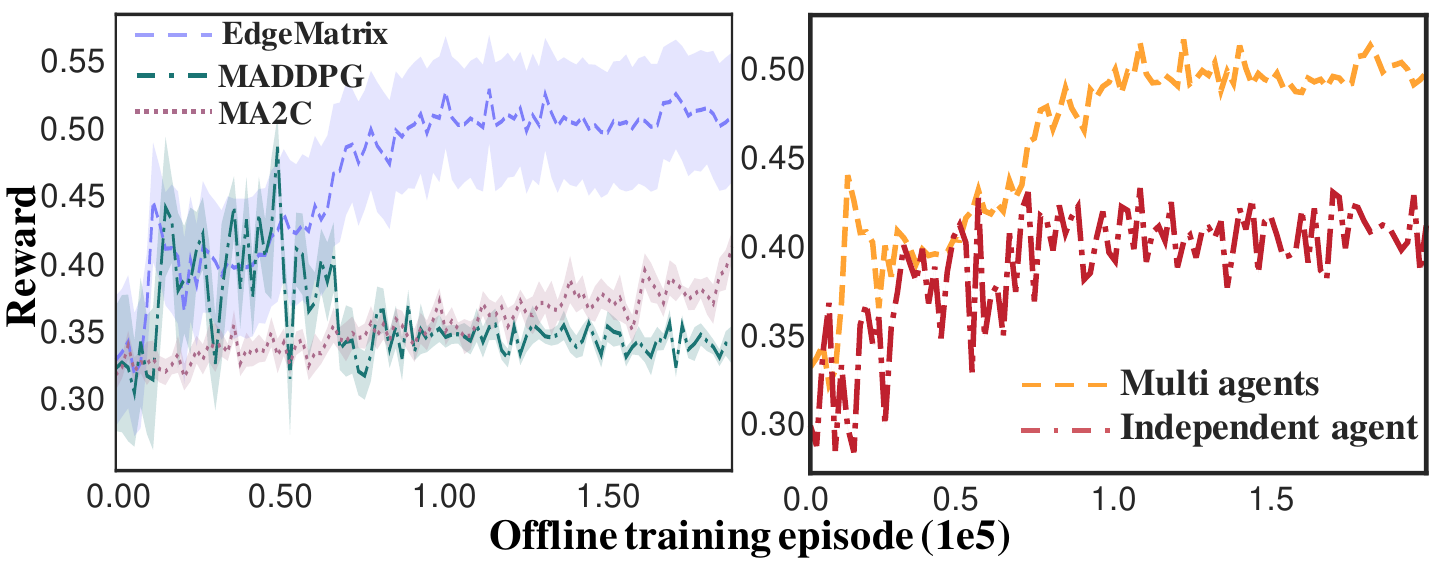}
	\setlength{\abovecaptionskip}{-0.5cm} 
	\caption{Learning ability of \textit{EdgeMatrix} compared with (a) the two baselines, MADDPG and MA2C and (b) independent PPO and independent DQN.}
	\label{fig:LearningAbility}
	\vspace{-0.6em}
\end{figure}

\section{Trace-driven Performance Evaluation}
\label{sec:Performance Evaluation}

\subsection{Setting of Key Parameters}
\label{subsec:Setting of Key Parameters}

As shown in Fig. \ref{fig:ParametersIdentify}, we first need to determine several important parameters for the training. The frequency of service orchestration has a significant impact on the training performance of \textit{EdgeMatrix}.
(\textit{$\romannumeral1$}) For \textit{frequent orchestration}, the cell must reload replicas of the service at each service orchestration, resulting in high service deployment/deletion costs; (\textit{$\romannumeral2$}) For \textit{infrequent orchestration}, since the requests of the networked system are constantly changing, it will be difficult for the service to adapt to the system dynamics. Based on Fig. \ref{fig:ParametersIdentify}, we define 100 slots per frame. In addition, we found that the system performance is better as the number of cells maintained per node increases, but when the number is too high, it is not beneficial to the system performance. Therefore, we define the number of cells maintained by each node is 6. As one of the basic features of SLA prioritization, the edge resource proportion determines the latency guarantee of the cell. However, the large weight of the edge resource proportion can result in negligible impact on core network resources, so we set it as $\mathcal{\varepsilon}=1.5$ based on Fig. \ref{fig:ParametersIdentify}.

\begin{figure*}[t]%
	\centering
	\includegraphics[width=0.85\linewidth]{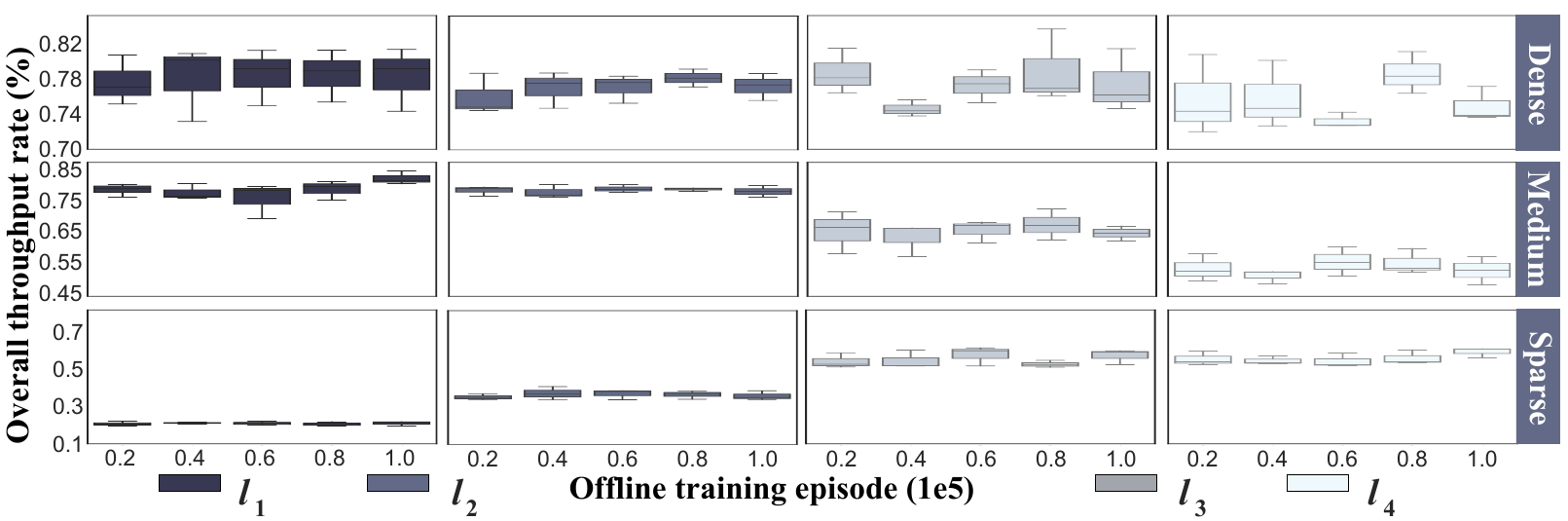}
	\caption{The impact of distribution heterogeneity on the performance of \textit{EdgeMatrix} in the case of three edge cluster densities: (a) dense (left), (b) medium (middle), (c) and sparse (right).}
	\label{fig:distance}
	\vspace{-1.8em}
\end{figure*}

\subsection{Learning Ability of \textit{EdgeMatrix}}

To verify the learning performance of \textit{EdgeMatrix}, we first conducted a comparison of its training performance with other baselines to demonstrate that \textit{EdgeMatrix} has the feasibility of convergence and effective learning ability. Then we compare the performance of the algorithm \textit{EdgeMatrix}, independent DQN\cite{mnih2015}, and independent PPO\cite{schulman2017} to demonstrate the effectiveness of \textit{EdgeMatrix} among 10 edge nodes.

As shown in Fig. \ref{fig:LearningAbility}(a), the three algorithms at the first 100 training episodes are in the exploration stage with random policy, i.e., they do not learn policies and only collect training data, and their rewards are at the same level. 
The rewards of \textit{EdgeMatrix} and MADDPG started to rise sharply after the 100th episode and remained flat after that. This is because the algorithms are able to learn some lessons from the data collected in the first 100 episodes. However, the learning performance is flattening along with the number of training times increases and both converge after the 1000th episode. In particular, the performance of \textit{EdgeMatrix} is improved by about 60\%, but MADDPG learns almost nothing from the experience in the data. Moreover, the performance improvement of MA2C is small, which means that the algorithm is not adaptable to scenarios.  
Fig. \ref{fig:LearningAbility}(b) demonstrates that a simple implementation of independent agent in the multi-agent environment is not excellent due to the non-stationary problem.

	\subsection{Impact of Geographical Distribution}
	Since one of the core differences between edge and cloud paradigms is the geographical distribution, we further explore the impact of it on \textit{EdgeMatrix}.  
	Firstly, we divide three scenarios including dense, medium and sparse based on the density of edge nodes, i.e., the average number of nodes per unit area.  
	Afterward, we compare the \textit{EdgeMatrix} performance in each edge node density scenario with different distributional heterogeneity, i.e., the clustering degree of edge nodes in terms of geographical distribution, where the distributional heterogeneity gradually increases from $l_1$ to $l_5$.
	
	As shown in  Fig. \ref{fig:distance}(a), in a dense edge node distribution scenario \textit{EdgeMatrix} has sufficient edge node resources to rely on, so it always performs well, and geographical distribution heterogeneity has little impact on performance, which is the ideal state to deploy \textit{EdgeMatrix}.
	
	In most cases, the edge node density can not reach a high enough degree. Fig. \ref{fig:distance}(b) shows that when the density of edge nodes is medium, the more evenly distributed \textit{EdgeMatrix} has better performance. This is because the geographical location of task nodes is nearly evenly distributed, so the smaller distributional heterogeneity of edge nodes can help reduce the average distance between task nodes and the nearest edge nodes, so as to a low latency response. 

	In addition, we also consider the scenario where the number of edge nodes is sparse. As shown in Fig. \ref{fig:distance}(c), when edge node resources are scarce, the more significant the heterogeneity of the node distribution is, the more beneficial it is to improve the performance of \textit{EdgeMatrix}. The reason is that the resources of a single edge node are too low to meet the needs of the service. However, higher heterogeneity of geographical distribution leads to more clustering of edge nodes in geographical distribution. Compared with a single edge node, clustered multiple edge nodes have higher resource advantages, thus ensuring resource supply in more cases.  

\subsection{Practicability of \textit{EdgeMatrix}}

Fig. \ref{fig:Practicability} validates two sub-objectives of \textit{EdgeMatrix}: (\textit{$\romannumeral1$}) maximizes the overall throughput;  (\textit{$\romannumeral2$}) reducing the SLA violation for various services. We compare the overall throughput rates of different algorithms in Fig. \ref{fig:Practicability}(b) under the request distribution from the Alibaba Cluster Trace~\cite{Alibabacluster} (as shown in Fig. \ref{fig:Practicability}(a)), showing the percentage of requests served by each channel to all requests in Fig. \ref{fig:Practicability}(c).
As shown in Fig. \ref{fig:Practicability}(c), \textit{EdgeMatrix} performs 36.7\% better than the nearest baseline with the same request distribution. In all six channels (1-6), the smaller the value of Channel\_Id, the higher the SLA priority that can be guaranteed for that channel. 
Among them, the channels with Channel\_Id (1-3) have horizontal cells, which means that the orchestrated services have high SLA priority and low transmission delay, and the channels with channel Channel\_Id (4-6) have vertical cells. 

In addition, with the weight of edge resources proportion set during our training process (i.e., $\mathcal{\varepsilon}=1.5$), the number of requests served by the horizontal channel accounted for 73.7\% of the total number of requests. The number of requests served by the vertical channel is accounted for 26.3\%. Note that the throughput rate of the different services can be adjusted by $\mathcal{\varepsilon}$.
\vs{In addition, to enhance the persuasiveness of the experimental results, we used the Google Cluster Trace~\cite{google} as a supplement, i.e., the dataset-driven request requirements as shown in Fig. \ref{fig:google} (a). Further, it can be found that \textit{EdgeMatrix} is 31.6\% higher than the closest baseline as shown in Fig. \ref{fig:google} (b). This demonstrates that \textit{EdgeMatrix} is well adapted to the dynamics of the system and can adjust its policy based on real-time request demand patterns.}
\begin{figure}[t]%
	\centering
	\includegraphics[width=1.0\linewidth]{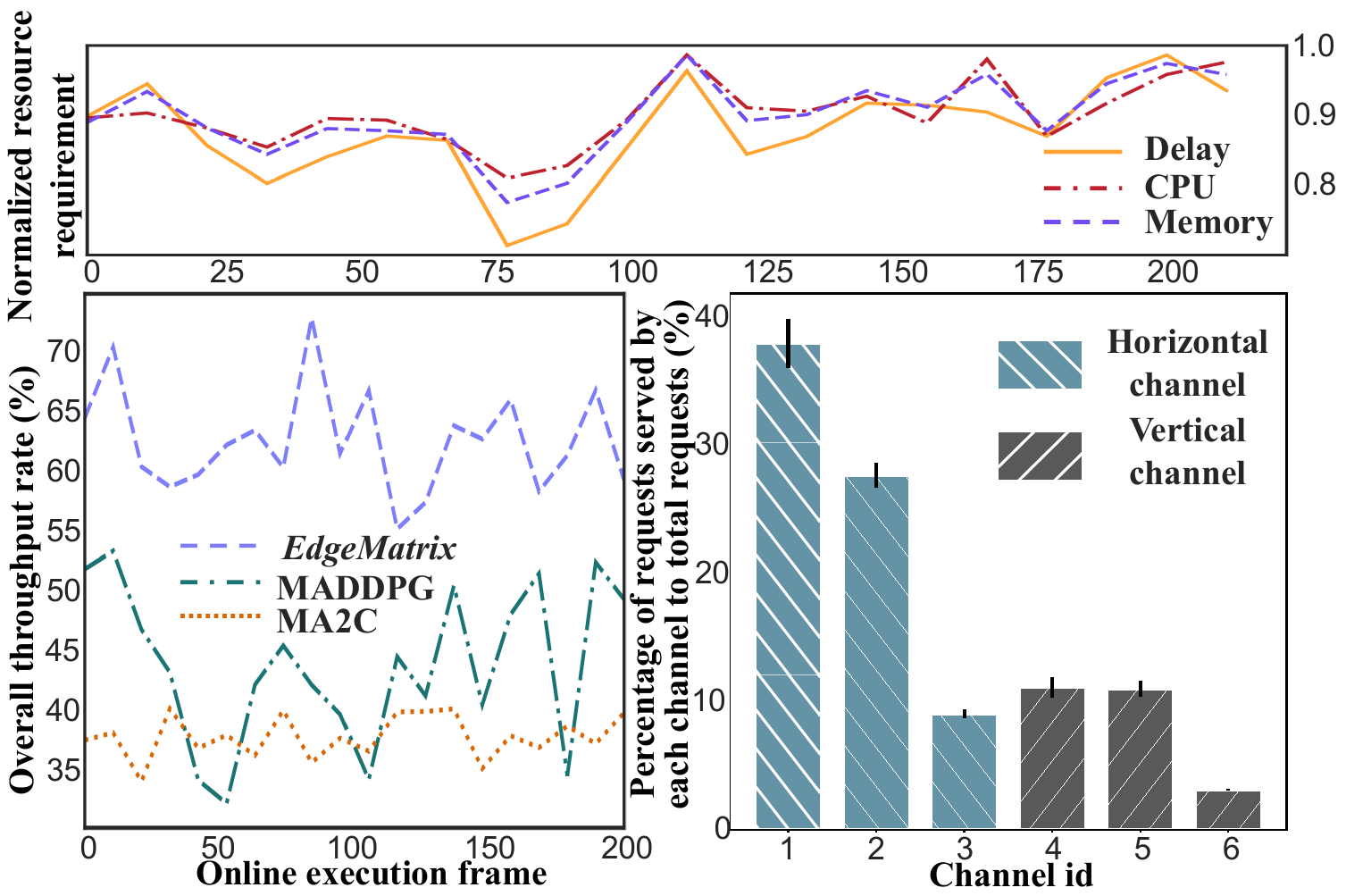}
	\setlength{\abovecaptionskip}{-0.5cm} 
	\caption{(a) Under the stochastic request arrivals (top), (b) the overall throughput rate of \textit{EdgeMatrix} against baselines (left buttom), and (c) the percentage of requests served by each channel to total requests (right buttom).}
	\label{fig:Practicability}
	\vspace{-1.3em}
\end{figure}

\subsection{Performance in Complex Environment}

To verify the adaptability of \textit{EdgeMatrix} in the multi-tier system, we evaluated the performance of \textit{EdgeMatrix} under three inherent challenges: multi-resource heterogeneity, resource competition, and networked system dynamics.

Fig. \ref{fig:Performance}(a) shows the performance of \textit{EdgeMatrix} under the resource heterogeneity of edge nodes. While keeping the total amount of each resource unchanged, we vary the variance of resources between edge nodes and classify the resource heterogeneity into five levels (1-5) according to the variance, where larger values imply higher heterogeneity. 
\vs{We find that \textit{EdgeMatrix} performs better when the heterogeneity levels of computational and memory resources are the same. It is due to the correlation between requests for computational resources and memory resource requirements, i.e., the request with high memory resource requirement will have high computing resource requirement with high probability. In addition, the effect of heterogeneity on \textit{EdgeMatrix} is small, with only a 3.9\% decrease in the strongest case of edge node resource heterogeneity compared to the weakest case.}

\textbf{\textit{EdgeMatrix under resource competition}}.
Resource requirements have a significant impact on throughput. When the load level of computational and memory resources is higher, the competition for such resources becomes more intense. Fig. \ref{fig:Performance}(b) shows that \textit{EdgeMatrix} is able to adapt to dynamic changes in the degree of resource competition, and its ability to adjust to memory resource competition is better than that of computational resources. 
\vs{This is because the extra computing resources can help to improve the efficiency of request processing, while the elasticity of memory resources is weak, i.e., the extra memory resources have little help to the request processing. 
In addition, it also shows that \textit{EdgeMatrix} benefits from the isolation capability of channels and the online learning capability of MADRL to sense the load changes of various resources in the environment and adjust the policy in time, thus maintaining efficient resource allocation capability in the dynamic resource competition.
}

\textbf{\textit{EdgeMatrix under networked system dynamics}}.
The size of bandwidth resources has a direct impact on the stability of the network system. The heterogeneous level of bandwidth is set in the same way as before. The higher the level (1-5) of the average network bandwidth in the system, the higher the average bandwidth will be. 
\vs{Fig. \ref{fig:Performance}(c) shows that the throughput rate of the service increases as the bandwidth increases and decreases as the heterogeneity of the bandwidth resources increases. Also, the larger the average bandwidth resource, the less affected the heterogeneity change. This indicates that the high amount of resources can compensate for the negative impact caused by resource heterogeneity. In addition, it also verifies that the resource reorganization of \textit{EdgeMatrix} plays a positive role in effectively controlling the edge nodes regardless of whether their resources are too large or too small.}

\begin{figure}[t]%
	\centering
	\includegraphics[width=1.0\linewidth]{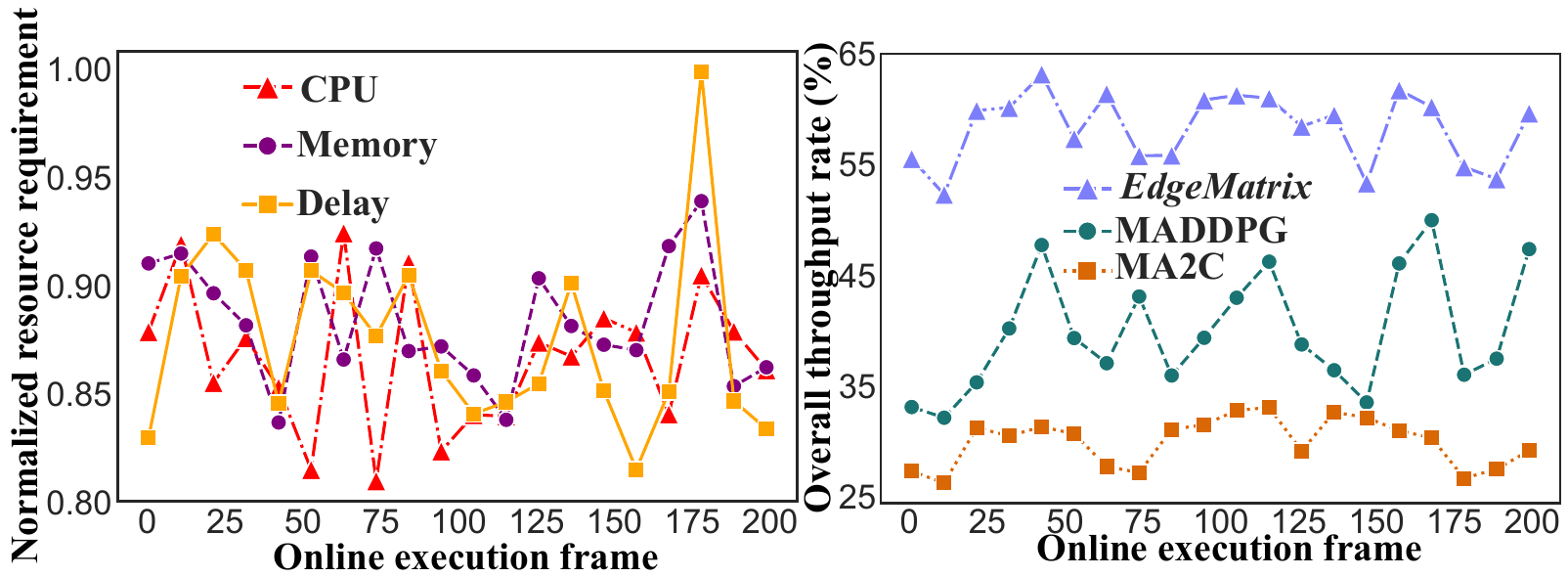}
	\setlength{\abovecaptionskip}{-0.5cm} 
	\caption{\vs{(a) Under the request requirements (right), (b) the overall throughput rate of \textit{EdgeMatrix} against baselines (left).}}
	\label{fig:google}
	\vspace{-0.6em}
\end{figure}

\begin{figure*}[t]%
	\centering
	\includegraphics[width=1.0\linewidth]{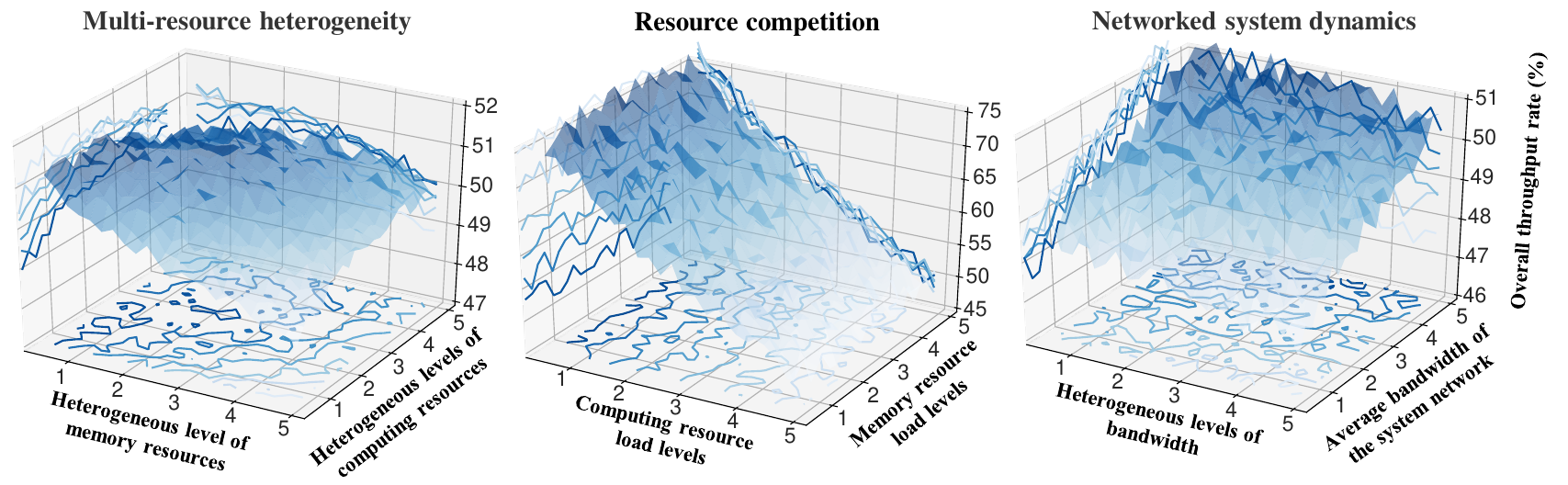}
	\caption{Practicability of \textit{EdgeMatrix} under (a) hetergeneous resources contains cpu and memory (left), (b) the resource competition (central), and (c) the dynamic networked environment (right).}
	\label{fig:Performance}
\end{figure*}

\begin{figure}[t]%
	\centering
	\includegraphics[width=1.0\linewidth]{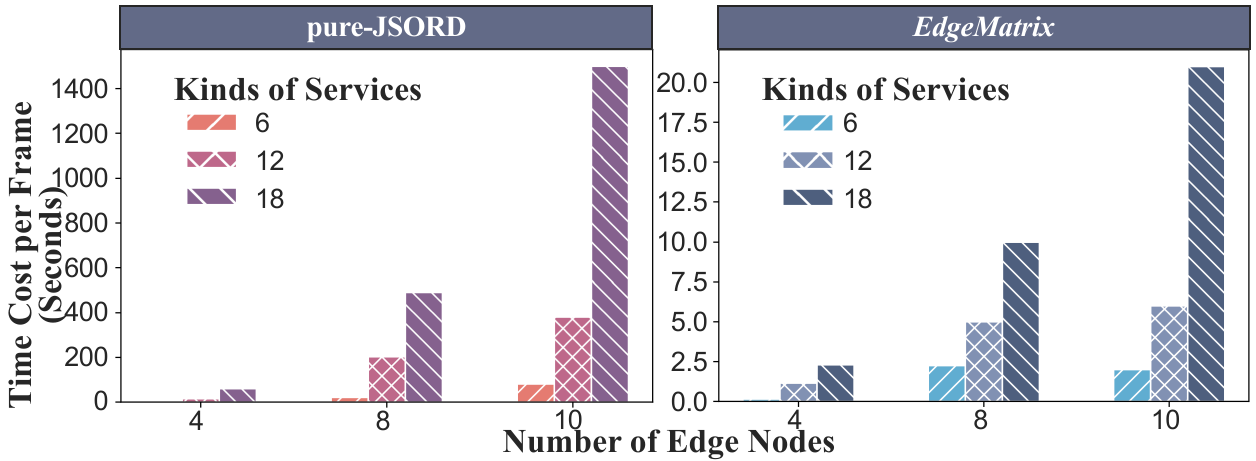}
	\caption{\vs{Runtime comparison between \textit{EdgeMatrix} and pure-JSORD.}}
	\label{fig:Time Complexity}
	\vspace{-0.6em}
\end{figure}

\subsection{Runtime Cost Reduction}

One of the most important contributions of \textit{EdgeMatrix} is to significantly reduce the runtime of service orchestration and request dispatch. As shown in Fig. \ref{fig:Time Complexity}, we compare the decision time cost by \textit{EdgeMatrix} and pure-JSORD to perform service orchestration and request dispatch for each frame under different numbers of nodes and service types when the number of channels is 6.
We found that the time required by pure-JSORD and \textit{EdgeMatrix} increases with the number of nodes and service type, but the time cost required by \textit{EdgeMatrix} is much lower than that of pure-JSORD. We observe that the runtime of pure-JSORD is 13 to 71 times higher than \textit{EdgeMatrix} for a small range of parameter values.
The reason is that the traditional method considers all services and requests within the global nodes, unlike \textit{EdgeMatrix} which (\textit{$\romannumeral1$}) divides the SLA priority levels of the services and orchestrates them with the corresponding SLA priority on each channel; (\textit{$\romannumeral2$}) dispatches requests to cells on each channel rather than the global edge nodes, and only dispatch requests with a corresponding SLA priority. These features allow \textit{EdgeMatrix} to perform service orchestration and request dispatch in parallel between channels and significantly reduce the magnitude of parameters.

\section{Related Work}
\label{sec:Related Work}

\textbf{\textit{Resource Customization}}. \vs{Our work is inspired by the design concept of network slicing~\cite{wu2022survey}, i.e., using SDN and NFV technologies to map resources in physical infrastructure to dedicated virtual resources. Further, customized services and resource isolation are provided to efficiently utilize the limited resources in the network system, such as RANs\cite{das2022optimal,gholivand2021cloud, elhattab2021edge} and Core Network\cite{bega2019,martin2020}.} However, some existing research considered a separate MEC host \cite{zhang2019, chantre2020} or Service Chain Functions (SCFs) \cite{pham2021optimized,luu2020} in the edge node as a slice for the multi-tier system. However, they do not fully consider the multi-resource heterogeneity in edge environments.

\textbf{\textit{Joint Service Orchestration and Request Dispatch}}. At present, many researchers have been concerned about how to use resources in multi-tier system reasonably and efficiently through service orchestration and request dispatch. 
\vs{In~\cite{lv2022microservice}, the authors consider the service orchestration problem in edge computing scenarios. To minimize the communication overhead, the authors design a DQN-based algorithm to deploy services and a heuristic-based elastic scaling algorithm to adjust the number of containers. In~\cite{jovsilo2020computation}, the authors consider the request dispatching problem in edge computing scenarios and design a game-theoretic-based request dispatching algorithm to minimize cost. However, the algorithm design in both~\cite{lv2022microservice} and~\cite{jovsilo2020computation} ignores the interrelationship between requests and services to design independent solutions.}
In addition, the authors of \cite{wang2022joint} take a more comprehensive perspective to realize the joint optimization design of task scheduling, service caching in the multi-tier system, but the heterogeneity of physical resources and service SLA guarantees are not sufficiently considered.

\section{Conclusion}
\label{sec:Conclusion}

In this paper, we propose \textit{EdgeMatrix} to implement service SLA prioritization guarantees for multi-tier systems under three challenges: multi-resource heterogeneity, resource competition, and networked system dynamics. First, \textit{EdgeMatrix} reduces the complexity of optimizing multi-tier systems by solving NMAC to re-customize physical resources into relatively independent resource cells and resource channels. After that, \textit{EdgeMatrix} further reduces the time required for algorithmic decision-making by executing JSORD in each resource channel in parallel. As a result, \textit{EdgeMatrix} has good performance in large-scale multilayer systems, i.e., \textit{EdgeMatrix} is 36.7\% better than the closest baseline.

\bibliographystyle{IEEEtran}

\bibliography{JSAC2022} 
 
\vspace{11pt}

\vspace{-3pt}

\vfill

\end{document}